\theoremstyle{plain}
\newtheorem{lemma}{Lemma}
\newtheorem{theorem}{Theorem}
\newtheorem{prop}{Proposition}
\theoremstyle{definition}
\theoremstyle{remark}
\newcommand{\continuous}{\textsc{Continuous}\xspace}
\newcommand{\VDD}{\textsc{Vdd-Hopping}\xspace}
\newcommand\II{\ensuremath{\mathcal{I}}\xspace}
\newcommand{\fmax}{\ensuremath{f_{\max}}\xspace}
\newcommand{\fmin}{\ensuremath{f_{\min}}\xspace}
\newcommand{\finf}{\ensuremath{f_{i}^{(\texttt{inf})}}\xspace}
\newcommand{\finfc}{\ensuremath{f^{(\texttt{inf})}}\xspace}
\newcommand{\fr}{\ensuremath{f_{\texttt{rel}}}\xspace}
\newcommand{\fdec}{\ensuremath{f_{\texttt{dec}}}\xspace}
\newcommand{\freex}{\ensuremath{f_{\texttt{re-ex}}}\xspace}
\newcommand{\sss}{\ensuremath{f_{\texttt{src}}}\xspace}
\newcommand{\ttt}{\ensuremath{f_{\texttt{leaf}}}\xspace}
\newcommand{\DDmin}{\ensuremath{D_{\texttt{min}}}\xspace}
\newcommand{\DDratio}{\ensuremath{\textsc{DeadlineRatio}}\xspace}
\newcommand{\exe}{\ensuremath{\mathcal{E}\!xe}\xspace}
\newcommand{\tricritcont}{\textsc{Tri-Crit-Cont}\xspace}
\newcommand{\tricritvdd}{\textsc{Tri-Crit-Vdd}\xspace}
\newcommand{\chaincont}{\textsc{Tri-Crit-Cont-Chain}\xspace}
\newcommand{\chainvdd}{\textsc{Tri-Crit-Vdd-Chain}\xspace}
\newcommand{\forkcont}{\textsc{Tri-Crit-Cont-Fork}\xspace}
\begin{document}

\title{Energy-aware scheduling under reliability 
and makespan constraints}

\author{Guillaume Aupy\thanks{Ecole Normale Sup\'erieure de Lyon, France} \and Anne Benoit\footnotemark[1] \and Yves Robert\footnotemark[1]~\thanks{University of Tennessee Knoxville, USA} \\ 
\{guillaume.aupy, anne.benoit, yves.robert\}@ens-lyon.fr}

\maketitle

\begin{abstract}
   We consider a task graph mapped on a set of homogeneous
  processors. We aim at minimizing the energy consumption while
  enforcing two constraints: a prescribed bound on the execution time
  (or makespan), and a reliability threshold.  Dynamic voltage and
  frequency scaling (DVFS) is an approach frequently used to reduce
  the energy consumption of a schedule, but slowing down the execution
  of a task to save energy is decreasing the reliability of the execution.
  In this work, to improve the reliability of a
  schedule while reducing the energy consumption, we allow for the
  re-execution of some tasks.  We assess the complexity of the
  tri-criteria scheduling problem (makespan, reliability, energy) of
  deciding which task to re-execute, and at which speed each execution
  of a task should be done, with two different speed models: either
  processors can have arbitrary speeds (\continuous model), or a
  processor can run at a finite number of different speeds and 
  change its speed during a computation (\VDD model). We propose
  several novel tri-criteria scheduling heuristics under the
  continuous speed model, and we evaluate them through a set of
  simulations. The two best heuristics turn out to be very efficient
  and complementary. 
\end{abstract}

\clearpage

\section{Introduction}

Energy-aware scheduling has proven an important issue in the past
decade, both for economical and environmental reasons. This holds true
for traditional computer systems, not even to speak of battery-powered
systems. More precisely, a processor running at speed $s$ dissipates
$s^3$ watts per unit of time~\cite{pow3IPDPS,BKP07,pow3ICPP}, hence it
consumes $s^3 \times d$ joules when operated during $d$ units of
time. To help reduce energy dissipation, processors can run at
different speeds. A widely used technique to reduce energy consumption
is \emph{dynamic voltage and frequency scaling (DVFS)}, also known as
speed scaling~\cite{pow3IPDPS,BKP07,pow3ICPP}. Indeed, by lowering
supply voltage, hence processor clock frequency, it is possible to
achieve important reductions in power consumption; 
faster speeds allow for a faster execution, but they also lead to a
much higher (supra-linear) power consumption. 
There are two popular models for processor speeds. In the \continuous
model, processors can have arbitrary speeds, and can vary them
continuously in the interval $[\fmin,\fmax]$. This model is
unrealistic (any possible value of the speed, say $\sqrt{e^{^\pi}}$,
cannot be obtained), but it is theoretically appealing~\cite{BKP07}.
In the \VDD model, a processor can run at a finite number of different
speeds ($f_1,...,f_m$). It can also change its speed during a
computation ({\em hopping} between different voltages, and hence
speeds). Any rational speed can therefore be
simulated~\cite{Miermont2007}. The energy consumed during the
execution of one task is the sum, on each time interval with constant
speed~$f$, of the energy consumed during this interval at speed~$f$.

Energy-aware scheduling aims at minimizing the energy consumed during
the execution of the target application. Obviously, this goal makes sense
only when coupled with some performance bound to achieve,
otherwise, the optimal solution always is to run each processor at the
slowest possible speed. In this paper, we consider a directed acyclic
graph (DAG) of $n$~tasks with precedence constraints, and the goal is
to schedule such an application onto a fully homogeneous platform
consisting of $p$~identical processors. This problem has been widely
studied with the objective of minimizing the total execution time, or
{\em makespan}, and it is well known to be
NP-complete~\cite{Brucker}. 
Since the introduction of
DVFS, many papers have dealt with the optimization of energy
consumption while enforcing a deadline, i.e., a bound on the
makespan~\cite{pow3IPDPS,BKP07,pow3ICPP,aupy12ccpe}.

There are many situations in which the mapping of the task graph is
given, say by an ordered list of tasks to execute on each processor,
and we do not have the freedom to change the assignment of a given
task. Such a problem occurs when optimizing for legacy applications,
or accounting for affinities between tasks and resources, or even when
tasks are pre-allocated~\cite{Rayward95}, for example for security
reasons. 
While it is not possible to change the allocation of a task, it is
possible to change its speed. This technique, which consists in
exploiting the slack due to workload variations, is called slack
reclaiming~\cite{LeeSakurai00,prathipati2004}. 
In our previous work~\cite{aupy12ccpe}, 
assuming that the mapping and a deadline are given, we have assessed
the impact of several speed variation models on the complexity of the
problem of minimizing the energy consumption. 
Rather than using a local approach such as
backfilling~\cite{Wang2010,prathipati2004}, which only reclaims gaps
in the schedule, we have considered the problem as a whole.

While energy consumption can be reduced by using speed scaling
techniques, it was shown in \cite{Zhu04EEM,Degal05SEI} that reducing
the speed of a processor increases the number of transient fault rates
of the system; the probability of failures increases exponentially,
and this probability cannot be neglected in large-scale
computing~\cite{Oliner04}. 
In order to make up for the loss in {\em reliability} due to the energy
efficiency, different models have been proposed for fault-tolerance: 
(i) \emph{re-execution} is the model under study in this work,
  and it consists in re-executing a task that does not meet the
  reliability constraint; it was also studied in
  \cite{Zhu04EEM,Zhu06,Izo07}; 
  (ii) \emph{replication} was studied in \cite{Assayad11,Girault09};
  this model consists in executing the same task on several processors
  simultaneously, in order to meet the reliability constraints; 
and (iii) \emph{checkpointing} 
  consists in "saving" the work done at
  some certain points of the work, hence reducing the amount of work
  lost when a failure occurs~\cite{Melhem03CP,Zhang03CP}.

This work focuses on the re-execution model, for several reasons. On
the one hand, replication is too costly in terms of both resource
usage and energy consumption: even if the first execution turns out
successful (no failure occurred), the other executions will still have
to take place.  Moreover, the decision of which tasks should be
replicated cannot be taken when the mapping is already fixed. On the
other hand, checkpointing is hard to manage with parallel processors,
and too costly if there are not too many failures. Altogether, it is
the "online/no-waste" characteristic of the corresponding algorithms
that lead us focus on re-execution.  The goal is then to ensure that
each task is reliable enough, i.e., either its execution speed is
above a threshold, ensuring a given reliability of the task, or the
task is executed twice to enhance its reliability. There is a clear
trade-off between energy consumption and reliability, since decreasing
the execution speed of a task, and hence the corresponding energy
consumption, is deteriorating the reliability.  This calls for
tackling the problem of considering the three criteria (makespan,
reliability, energy) simultaneously. This tri-criteria optimization
brings dramatic complications: in addition to choosing the speed of
each task, as in the deadline/energy bi-criteria problem, we also need
to decide which subset of tasks should be re-executed (and then choose
both execution speeds).  Few authors have tackled this problem; we detail below
the closest works to ours~\cite{Izo07,Zhu06,Assayad11}.

Izosinov et al. \cite{Izo07} study a tri-criteria optimization problem
with a given mapping on heterogeneous architectures.  However, they do
not have any formal energy model, and they assume that the user
specifies the maximum number of failures per processor tolerated to
satisfy the reliability constraint, while we consider any number of
failures but ensure a reliability threshold for each task.  
Zhu and Aydin~\cite{Zhu06}  
are also addressing a tri-criteria
optimization problem similar to ours, and choose some tasks that have
to be re-executed to match the reliability constraint. However, they
restrict to the scheduling problem on one single processor, and they
consider only the energy consumption of the first execution of a task
(best-case scenario) when re-execution is done.  
Finally, Assayad et al. \cite{Assayad11} have recently proposed an
off-line tri-criteria scheduling heuristic (TSH), which uses active
replication to minimize the makespan, with a threshold on the global
failure rate and the maximum power consumption.
TSH is an improved critical-path list scheduling heuristic that takes
into account power and reliability before deciding which task to
assign and to duplicate onto the next free processors. The complexity
of this heuristic is unfortunately exponential in the number of
processors.
Future work will be devoted to compare our heuristics to TSH, and
hence to compare re-execution with replication.

Given an application with dependence constraints and a mapping of this
application on a homogeneous platform, we present in this paper
theoretical results and tri-criteria heuristics that use re-execution
in order to minimize the energy consumption under the constraints of
both a reliability threshold per task and a deadline
bound. 
The first contribution is a formal model for this tri-criteria
scheduling problem
(Section~\ref{models}). The second contribution is to provide
theoretical results for the different speed models, \continuous
(Section~\ref{cont_model}) and \VDD (Section~\ref{vdd_model}).  
The third contribution is the design of novel tri-criteria scheduling
heuristics that use re-execution to increase the reliability of a
system under the \continuous model (Section~\ref{heurWC}), and their
evaluation through extensive simulations
(Section~\ref{sec.exp_results}).  To the best of our knowledge, this
work is the first attempt to propose practical solutions to this
tri-criteria problem.  Finally, we give concluding remarks and
directions for future work in Section~\ref{sec.conclusion}.

\section{The tri-criteria problem}
\label{models}
	
Consider an application task graph $\mathcal{G}=(V,\mathcal{E})$,
where $V= \{T_1, T_2, \dots, T_n\}$ is the set of tasks, $n = |V|$,
and where $\mathcal{E}$ is the set of precedence edges between
tasks. For $1 \leq i \leq n$, task~$T_i$ has a weight~$w_i$, that
corresponds to the computation requirement of the task.  
We also consider particular class of task graphs, such as {\em linear
  chains} where $\mathcal{E}=\cup_{i=1}^{n-1}\{T_i \rightarrow
T_{i+1}\}$, and {\em forks} with $n+1$ tasks $\{T_0,T_1, T_2, \dots,
T_n\}$ and $\mathcal{E}=\cup_{i=1}^{n}\{T_0 \rightarrow T_{i}\}$.

We assume that tasks are mapped onto a parallel platform made up of
$p$ identical processors.  Each processor has a set of available
speeds that is either continuous (in the interval $[\fmin,\fmax]$) or
discrete (with $m$ modes $\{f_1,\cdots,f_m\}$), depending on the speed
model (\continuous or \VDD).  The goal is to minimize the energy
consumed during the execution of the graph while enforcing a deadline
bound and matching a reliability threshold.  To match the reliability
threshold, some tasks are executed once at a speed high enough to
satisfy the constraint, while some other tasks need to be
re-executed. We detail below the conditions that are enforced on the
corresponding execution speeds. The problem is therefore to decide
which task to re-execute, and at which speed to run each execution of
a task. 

In this section, for the sake of clarity, we assume that a task is
executed at the same (unique) speed throughout execution, or at two
different speeds in the case of re-execution.  In
Section~\ref{cont_model}, we show that this strategy is indeed optimal
for the \continuous model; in Section~\ref{vdd_model}, we show that
only two different speeds are needed for the \VDD model (and we update
the corresponding formulas accordingly). We now detail the three
objective criteria (makespan, reliability, energy), and then define
formally the problem. 

\subsection{Makespan}
\label{makespan_models}

The makespan of a schedule is its total execution time.  The
first task is scheduled at time $0$, so that the makespan of a
schedule is simply the maximum time at which one of the processors
finishes its computations.  We consider a {\em deadline bound} $D$, which is
a constraint on the makespan. 

Let $\exe(w_i,f)$ be the execution time of a task $T_i$ of weight
$w_i$ at speed $f$. We assume that the cache size is adapted to the
application, therefore ensuring that the execution time is linearly
related to the frequency~\cite{Melhem03CP}: $\exe(w_i,f) =
\frac{w_i}{f}$.  When a task is scheduled to be re-executed at two
different speeds $f^{(1)}$ and~$f^{(2)}$, we always account for both
executions, even when the first execution is successful, and hence 
$\exe(w_i,f^{(1)},f^{(2)}) = \frac{w_i}{f^{(1)}} +
\frac{w_i}{f^{(2)}}$. In other words, we consider a worst-case
execution scenario, and the deadline~$D$ must be matched even in the
case where all tasks that are re-executed fail during their first
execution.

\subsection{Reliability}
\label{reliability_models}

To define the reliability, we use the fault model of Zhu et
al.~\cite{Zhu04EEM,Zhu06}. \emph{Transient} failures are faults caused
by software errors for example. They invalidate only the execution of
the current task and the processor subject to that failure will be
able to recover and execute the subsequent task assigned to it (if
any). In addition, we use the reliability model introduced by Shatz
and Wang \cite{Shatz89}, which states that the radiation-induced
transient faults follow a Poisson distribution.  The parameter
$\lambda$ of the Poisson distribution is then:\\ 
\begin{equation}
	\label{fault}
\lambda(f)=\tilde{\lambda_0} \;  e^{\tilde{d}\frac{\fmax-f}{\fmax-\fmin}},
\end{equation}
where $\fmin\leq f \leq \fmax$ is the processing speed, the exponent
$\tilde{d}\geq0$ is a constant, indicating the sensitivity of fault
rates to DVFS, and $\tilde{\lambda_0}$ is the average fault rate
corresponding to \fmax.  We see that reducing the speed for energy
saving increases the fault rate exponentially.  The reliability of a
task $T_i$ executed once at speed $f$ is
$R_i(f)=e^{-\lambda(f)\times\exe(w_i,f)}$.  
Because the fault rate is usually very small, of the order of
$10^{-6}$ per time unit in~\cite{Baleani03,Izo07}, $10^{-5}$
in~\cite{Assayad11}, we can use the first order approximation of
$R_i(f)$ as\\ 
\begin{equation}
	\label{rel_first_order}
R_i(f) = 1-\lambda(f)\times\exe(w_i,f) 
       = 1-\tilde{\lambda_0}\; e^{\tilde{d}\frac{\fmax-f}{\fmax-\fmin}} \times \frac{w_i}{f} 
       = 1-\lambda_0\;e^{-df} \times \frac{w_i}{f},
\end{equation}
where $d=\frac{\tilde{d}}{\fmax-\fmin}$ and $\lambda_0 =
\tilde{\lambda_0} e^{d\fmax}$. This equation holds if $\varepsilon_{i}
= \lambda(f) \times \frac{w_i}{f} \ll 1$.  With, say, 
$\lambda(f) = 10^{-5}$, we need $\frac{w_i}{f} \leq 10^{3}$ to get
an accurate approximation with $\varepsilon_{i} \leq 0.01$: the task
should execute within $16$ minutes. In other words, large
(computationally demanding) tasks require reasonably high processing
speeds with this model (which makes full sense in practice).

We want the reliability~$R_i$ of each task $T_{i}$ to be greater than a
given threshold, namely $R_{i}(\fr)$, hence enforcing a local
constraint dependent on the task $R_i \geq R_{i}(\fr)$.  
If task~$T_{i}$ is executed only
once at speed~$f$, then the reliability of~$T_i$ is
$R_i=R_i(f)$. Since the reliability increases with speed, we must have
$f\geq \fr$ to match the reliability constraint. 
If task $T_{i}$ is re-executed (speeds $f^{(1)}$ and~$f^{(2)}$), then
the execution of~$T_{i}$ is successful if and only if both attempts do
not fail, so that the reliability of~$T_{i}$ is $R_{i}= 1 - (1 -
R_i(f^{(1)}))( 1 - R_i(f^{(2)}))$, and this quantity should be at
least equal to $R_{i}(\fr)$.

\subsection{Energy}
\label{energy_models}

The total energy consumption corresponds to the sum of the energy
consumption of each task. Let $E_i$ be the energy consumed by
task~$T_i$. For one execution of task~$T_i$ at speed~$f$, the
corresponding energy consumption is $E_i(f) =\exe(w_i,f) \times f^3 =
w_i \times f^2$, 
which corresponds to the dynamic part of the classical energy  
models of the literature~\cite{pow3IPDPS,BKP07,pow3ICPP,aupy12ccpe}.
Note that we do not take static energy into account, because
all processors are up and alive during the whole execution.

If task~$T_{i}$ is executed only once at speed~$f$, then 
$E_{i} = E_i(f)$. 
Otherwise, if task~$T_{i}$ is re-executed at speeds $f^{(1)}$ and $f^{(2)}$, 
it is natural to add up the energy consumed during both executions,
just as we add up both execution times when enforcing the makespan
deadline. Again, this corresponds to the worst-case execution
scenario.  We obtain 
$E_i = E_i(f^{(1)}_i) + E_i(f^{(2)}_i)$.  
Note that some authors~\cite{Zhu06} consider only the energy spent for
the first execution, which seems unfair: re-execution comes at a price
both in the deadline and in the energy consumption.  
Finally, the total energy consumed by the schedule, which we aim at
minimizing, is 
$E = \sum_{i=1}^{n} E_i$.

\subsection{Optimization problems}
\label{opt_problem}

The two main optimization problems are derived from the two different
speed models:  

\begin{compactitem}
\item {\tricritcont. } Given an application graph
  $\mathcal{G}=(V,\mathcal{E})$, mapped onto $p$~homogeneous
  processors with continuous speeds, \tricritcont is the problem of
  deciding which tasks should be re-executed and at which speed each
  execution of a task should be processed, in order to minimize the
  total energy consumption~$E$,
  subject to the deadline bound~$D$ and to the local reliability
  constraints $R_i \geq R_i(\fr)$ for each $T_i \in
  V$. 

\item {\tricritvdd. }
  This is the same problem as \tricritcont, but with the \VDD 
  model.
\end{compactitem}

\medskip

We also introduce variants of the problems for particular application
graphs: \chaincont is the same problem as
\tricritcont when the task graph is a linear
chain, mapped on a single processor; 
and \forkcont is the same problem as
\tricritcont when the task graph is a fork, and
each task is mapped on a distinct processor. We have similar
definitions for the \VDD model.

\section{\continuous model}
\label{cont_model}

As stated in Section~\ref{models}, we start by proving 
that with the \continuous model, it is
always optimal to execute a task at a unique speed throughout its
execution:

\begin{lemma}
\label{single_speed}
With the \continuous model, it is optimal to execute each task at a 
unique speed throughout its execution. 
\end{lemma}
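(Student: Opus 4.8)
The plan is to show that any feasible schedule can be transformed, one execution at a time, into a feasible schedule of no larger energy in which every execution of every task is carried out at a single constant speed; since the graph has finitely many tasks, iterating this transformation yields an optimal single-speed schedule. So it suffices to treat one execution of one task $T_i$ in isolation, keeping everything else in the schedule untouched.

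First I would suppose that this execution is run with a piecewise-constant speed profile, using speed $f_j \in [\fmin,\fmax]$ during a time interval of length $t_j$, for $j = 1,\dots,k$ (the general measurable case follows by the integral form of Jensen's inequality, or by approximation). The work performed is $\sum_j f_j t_j = w_i$ and the elapsed time is $T = \sum_j t_j$. I then replace this profile by the single speed $\bar f = w_i / T$ held for the whole duration $T$; note $\bar f$ is a weighted average of the $f_j$'s, hence still lies in $[\fmin,\fmax]$. By construction the completion time of the execution is unchanged, so the deadline $D$ is still met and no precedence constraint is violated, and the work done is unchanged as well.

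Next I would check the other two criteria. Writing $\alpha_j = t_j / T$, so that $\sum_j \alpha_j = 1$, convexity of $x \mapsto x^3$ and Jensen's inequality give $\sum_j \alpha_j f_j^3 \ge \bigl(\sum_j \alpha_j f_j\bigr)^3 = \bar f^{\,3}$, hence the energy $\sum_j f_j^3 t_j = T \sum_j \alpha_j f_j^3 \ge T \bar f^{\,3}$ of the original profile is at least the energy $w_i \bar f^{\,2} = T\bar f^{\,3}$ of the constant-speed execution. For the reliability, the key observation is that $\lambda$ in~\eqref{fault} is the exponential of an affine function of $f$, hence convex; therefore $\sum_j \alpha_j \lambda(f_j) \ge \lambda\bigl(\sum_j \alpha_j f_j\bigr) = \lambda(\bar f)$, so the expected number of faults $\sum_j \lambda(f_j) t_j = T \sum_j \alpha_j \lambda(f_j)$ incurred by the original profile is at least $\lambda(\bar f)\,\exe(w_i,\bar f) = \lambda(\bar f)\,T$. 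Consequently the reliability of the replacement execution is no smaller than that of the original profile, whether one uses the exact expression $e^{-\lambda\,\exe}$ or the first-order approximation in~\eqref{rel_first_order}; in particular the local constraint $R_i \ge R_i(\fr)$ is preserved.

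Finally I would lift this to the re-execution case: if $T_i$ is executed twice, I apply the transformation to each of the two executions independently. Since $\exe$ and the energy add up over the two executions, and the combined reliability $1 - (1 - R_i(f^{(1)}))(1 - R_i(f^{(2)}))$ is nondecreasing in each $R_i(f^{(j)})$, replacing each execution by its constant-speed counterpart preserves feasibility and does not increase energy. The main obstacle is really just making sure the reliability inequality points the right way --- that is, that $\lambda$ is convex, so that dispersing the speed around its mean can only hurt reliability --- and handling arbitrary (not merely piecewise-constant) speed functions cleanly; the energy part is the familiar convexity argument already used in the bi-criteria deadline/energy setting.
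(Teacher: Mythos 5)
Your proposal is correct and follows essentially the same route as the paper: replace a varying-speed execution by its work-preserving average speed, and invoke convexity of $x \mapsto x^3$ for the energy and of $f \mapsto \lambda(f)$ (an exponential of an affine function) for the reliability. The only cosmetic difference is that you apply Jensen's inequality to the whole $k$-piece profile at once, whereas the paper merges two adjacent constant-speed pieces at a time; both arguments rest on exactly the same two convexity facts.
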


The idea is to consider a task whose speed changes during the
execution; we exhibit a speed such that the execution time of the task
remains the same, but where both energy and reliability are
potentially improved, by convexity of the functions.  

\begin{proof}
  We can assume without loss of generality that the function that
  gives the speed of the execution of a task is a piecewise-constant
  function. The proof of the general case is a direct corollary from
  the theorem that states that any piecewise-continuous function
  defined on an interval $[a,b]$ can be uniformly approximated as 
  closely as desired by a piecewise-constant 
  function~\cite{rudin-principles}.

  Suppose that in the optimal solution, there is a task whose speed
  changes during the execution. Consider the first time-step at which
  the change occurs: the computation begins at speed~$f$ from time~$t$
  to time~$t'$, and then continues at speed~$f'$ until time~$t''$.
  The total energy consumption for this task in the time
  interval~$[t,t'']$ is $E=(t'-t)\times f^3+(t''-t')\times (f')^3$.
  Moreover, the amount of work done for this task is $W=(t'-t)\times
  f+(t''-t')\times f'$.  The reliability of the task is exactly
  $1-\lambda_0\left( (t'-t)\times e^{-df}+(t''-t')\times e^{-df'} +
    r\right )$, where $r$ is a constant due to the reliability of the
  rest of the process, which is independent from what happens during
  $[t,t'']$. The reliability is a function that increases when the
  function $h(t,t',t'',f,f') = (t'-t)\times e^{-df}+(t''-t')\times
  e^{-df'}$ decreases. 

  If we run the task during the whole interval~$[t,t'']$ at constant
  speed~$f_d=W/(t''-t)$, the same amount of work is done within the same
  time, and the energy consumption during this interval of time
  becomes $E'=(t''-t)\times f_d^3$.  
  Note that the new speed can be expressed as $f_d = af+(1-a)f'$,
  where $0<a = \frac{t'-t}{t''-t}<1$.
  Therefore, because of the convexity of the function $x \mapsto x^3$,
  we have $E'<E$. Similarly, since $x \mapsto e^{-dx}$ is a convex
  function, $h(t,t',t'',f_d,f_d)<h(t,t',t'',f,f')$, and the
  reliability constraint is also matched.
  This contradicts the hypothesis of optimality of the first solution,
  and concludes the proof.
\end{proof}

Next we show that not only a task is executed at a single speed, but
that its re-execution (whenever it occurs) is executed at the same speed
as its first execution:

\begin{lemma}
\label{WC_re-ex}
With the \continuous model, it is optimal to re-execute each task
(whenever needed) at the same speed as its first execution, and this
speed $f$ is such that $\finf \leq  f< \frac{1}{\sqrt{2}}\fr$, where\\[-.3cm]
 \begin{equation}
 	\label{eq.fmin}
 \lambda_0 w_i \frac{e^{-2d\finf}}{(\finf)^{2}} = \frac{e^{-d\fr}}{\fr}.
 \end{equation} 
\end{lemma}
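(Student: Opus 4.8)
The plan is to work throughout with the first-order reliability expression from~\eqref{rel_first_order}, i.e. $1-R_i(f)=\lambda_0 e^{-df}w_i/f$, so that for a task $T_i$ re-executed at speeds $f^{(1)}$ and $f^{(2)}$ the reliability constraint $1-(1-R_i(f^{(1)}))(1-R_i(f^{(2)}))\ge R_i(\fr)$ rewrites, after dropping the $O(\lambda_0^2)$ cross term consistently with~\eqref{rel_first_order}, as
\[
\lambda_0 w_i\,\frac{e^{-d(f^{(1)}+f^{(2)})}}{f^{(1)}f^{(2)}}\ \le\ \frac{e^{-d\fr}}{\fr}.
\]
Starting from an optimal solution, which by Lemma~\ref{single_speed} already uses a single speed per execution, the proof splits into two exchange arguments: first, that both executions of a re-executed task may be taken at one common speed; second, that this common speed necessarily lies in $[\finf,\tfrac{1}{\sqrt2}\fr)$.

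For the first part, I would take a re-executed task with $f^{(1)}\neq f^{(2)}$ and replace both speeds by the quadratic mean $f=\sqrt{\tfrac12\big((f^{(1)})^2+(f^{(2)})^2\big)}$. This choice leaves the energy $w_i\big((f^{(1)})^2+(f^{(2)})^2\big)=2w_if^2$ unchanged, while the total execution time drops from $w_i/f^{(1)}+w_i/f^{(2)}$ to $2w_i/f$ because the quadratic mean dominates the harmonic mean; hence the deadline and all precedence constraints are only relaxed, and the schedule stays feasible. For reliability, quadratic mean $\ge$ geometric mean gives $f^2\ge f^{(1)}f^{(2)}$ and quadratic mean $\ge$ arithmetic mean gives $2f\ge f^{(1)}+f^{(2)}$, so the left-hand side of the displayed constraint can only decrease. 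The modified solution is therefore feasible, has the same energy, and is thus also optimal.

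For the second part, on such an optimal solution a re-executed task runs at a single speed $f$ for both attempts, and the reliability constraint becomes exactly $\lambda_0 w_i\, e^{-2df}/f^2\le e^{-d\fr}/\fr$. Since $f\mapsto e^{-2df}/f^2$ is strictly decreasing on $(0,\fmax]$, this is equivalent to $f\ge\finf$ with $\finf$ defined by the equality~\eqref{eq.fmin}, which gives the lower bound. For the upper bound, suppose $f\ge\tfrac{1}{\sqrt2}\fr$ and replace the re-execution of $T_i$ by a single execution at a speed $g$ chosen in $[\max(\fr,f/2),\sqrt2 f]$ (this interval is non-empty precisely because $f\ge\tfrac{1}{\sqrt2}\fr$, and $g\le\fmax$ since $\fr\le\fmax$): then $g\ge\fr$ so the single-execution reliability constraint holds, $g\le\sqrt2 f$ so the energy $w_ig^2\le 2w_if^2$ is not increased, and $g\ge f/2$ so the execution time $w_i/g\le 2w_i/f$ is not increased. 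This yields an optimal solution in which $T_i$ is executed once; iterating over all tasks produces an optimal solution in which every re-executed task runs at a speed $f$ with $\finf\le f<\tfrac{1}{\sqrt2}\fr$.

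The main obstacle is choosing the right averaging in the first exchange: one needs a single mean value that simultaneously leaves the energy untouched, does not lengthen the schedule, and does not worsen the reliability margin $e^{-d(f^{(1)}+f^{(2)})}/(f^{(1)}f^{(2)})$, and the quadratic mean is the unique value meeting all three at once — equalizing instead to the harmonic mean would preserve the makespan but move the reliability term in the wrong direction. A minor point to verify separately is that the interval $[\finf,\tfrac{1}{\sqrt2}\fr)$ is non-empty, i.e. $\finf<\tfrac{1}{\sqrt2}\fr$; this holds in the small-$\lambda_0 w_i$ regime of Section~\ref{reliability_models}, since then the left-hand side of~\eqref{eq.fmin} evaluated at $\tfrac{1}{\sqrt2}\fr$ is already strictly below $e^{-d\fr}/\fr$.
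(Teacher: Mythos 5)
Your proof is correct, and its skeleton (exchange both executions to a common speed, then bound that speed from below by the reliability constraint and from above by comparison with a single execution at $\fr$) is the same as the paper's; the interesting difference is the choice of common speed. The paper equalizes to the geometric mean $\sqrt{f_if'_i}$ when $d=0$ (which preserves the reliability product exactly, shortens the execution by GM\,$\ge$\,HM, and strictly lowers the energy), but for $d\neq 0$ it has to fall back on $f=\max(f_1,f_2)$ with $f_1$ defined only implicitly by $2(dX+\ln X)=(df_i+\ln f_i)+(df'_i+\ln f'_i)$ and $f_2$ the harmonic mean, the energy comparison being argued by convexity. Your quadratic mean handles every $d\ge 0$ uniformly: it pins the energy exactly, QM\,$\ge$\,HM gives the deadline, and QM\,$\ge$\,GM together with QM\,$\ge$\,AM make both the denominator and the exponential factor of $e^{-d(f^{(1)}+f^{(2)})}/(f^{(1)}f^{(2)})$ move the right way. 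The price is that your exchange yields an equally good solution rather than a strictly better one, but that is all the lemma is used for (a WLOG normalization, cf.\ the remark after Lemma~\ref{WC_re-ex}). Your upper-bound step is in fact slightly more careful than the paper's: choosing $g\in[\max(\fr,f/2),\sqrt2 f]$ also covers the degenerate case $f>2\fr$, where a single execution at $\fr$ would be slower than the double execution. Two small remarks: there is no $O(\lambda_0^2)$ cross term to drop --- the constraint $(1-R_i(f^{(1)}))(1-R_i(f^{(2)}))\le 1-R_i(\fr)$ is already exactly your displayed inequality under the first-order model of Equation~(\ref{rel_first_order}); and your observation that $\finf<\frac{1}{\sqrt2}\fr$ requires the small-fault-rate regime is a genuine (if minor) point that the paper leaves implicit.
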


Similarly to the proof of Lemma~\ref{single_speed}, we exhibit a
unique speed for both executions, in case they differ, so that the
execution time remains identical but both energy and reliability are
improved. If this unique speed is greater than
$\frac{1}{\sqrt{2}}\fr$, then it is better to execute the task only
once at speed~$\fr$, and if $f$ is lower than~$\finf$, then the
reliability constraint is not matched. 

\begin{proof}
  Consider a task $T_i$ executed a first time at speed $f_i$, and a
  second time at speed $f'_i>f_i$. Assume first that $d=0$, i.e., the
  reliability of task $T_i$ executed at speed $f_{i}$ is $R_i(f_{i}) =
  1-\lambda_0\frac{w_i}{f_{i}}$.  We show that executing task $T_i$
  twice at speed $f = \sqrt{f_i f'_i}$ improves the energy consumption
  while matching the deadline and reliability constraints.  Clearly
  the reliability constraint is matched, since $1-\lambda_0^2 w_i^2
  \frac{1}{f^2} = 1- \lambda_0^2 w_i^2 \frac{1}{f_i f'_i}$.  The fact
  that the deadline constraint is matched is due to the fact that
  $\sqrt{f_i f'_i} \geq \frac{2f_if'_i}{f_i+f'_i}$ (by squaring both
  sides of the equation we obtain $(f_i-f'_i)^2 \geq 0$). Then we use
  the fact that $f_d = \frac{2f_if'_i}{f_i+f'_i}$ is the minimal speed
  such that $\forall f \geq f_d,~ \frac{2w_i}{f} < \frac{w_i}{f_i} +
  \frac{w_i}{f'_i}$.  Finally, it is easy to see that the energy
  consumption is improved since $2 f_i f'_i \leq f^2_i + f'^2_i$,
  hence $2w_if_i f'_i \leq w_if^2_i + w_if'^2_i$.

  In the general case when $d\neq 0$, instead of having a closed form
  formula for the new speed $f$ common to both executions, we have
  $f=\max(f_1,f_2)$, where $f_1$~is dictated by the reliability
  constraint, while $f_2$ is dictated by the deadline constraint. 
  $f_1$ is the solution to the equation $2(dX + \ln X) = (df_i + \ln
  f_i) + (df'_i + \ln f'_i)$; this equation comes from the reliability
  constraint: the minimum speed~$X$ to match the reliability is
  obtained with
  $1-\lambda_0^2w_i^2\frac{e^{-df_i}}{f_i}\frac{e^{-df'_i}}{f'_i} =
  1-\lambda_0^2w_i^2\frac{e^{-2dX}}{X^2}$.  The deadline constraint
  must also be enforced, and hence $f_2=\frac{2f_if'_i}{f_i+f'_i}$
  (minimum speed to match the deadline).  Then the fact that the
  energy does not increase comes from the convexity of this function.

  Let $f$ be the unique speed at which the task is executed
  (twice). If $f\geq  \frac{1}{\sqrt2}\fr$, then executing the task
  only once at speed~\fr has  a lower energy consumption and execution
  time, while still matching the reliability constraint. Hence it is
  not optimal to re-execute the task unless $f<\frac{1}{\sqrt2}\fr$.  
  Finally, note that $f$ must be greater than \finf, solution 
  of Equation~(\ref{eq.fmin}), since \finf is the minimum
  speed such that the reliability constraint is met if task~$T_i$ is
  executed twice at the same speed. 
\end{proof}

Note that both lemmas can be applied to any solution of the
\tricritcont problem, not just optimal solutions, hence all heuristics
of Section~\ref{heurWC} will assign a unique speed to each task, be it
re-executed or not. 

We are now ready to assess the problem complexity: 
\begin{theorem}
\label{cont_all_exec}
The \chaincont problem is NP-hard, but not known to be in NP. 
\end{theorem}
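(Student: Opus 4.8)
The plan is to combine the two lemmas just proved with a reduction from \textsc{2-Partition}. By Lemmas~\ref{single_speed} and~\ref{WC_re-ex}, a schedule is entirely described by the subset $S$ of re-executed tasks together with one speed per task: a task outside $S$ runs once at a speed at least $\fr$, a task in $S$ runs twice at a common speed at least $\finf$. For a \emph{fixed} $S$ the choice of speeds is a separable convex program (objective $\sum w_i f^2$, reliability floors, and the deadline a sum of terms $w_i/f$), hence polynomial; its KKT conditions show that in the optimum each task sits either at its own reliability floor ($\fr$, resp.\ $\finf$) or at one single speed fixed by the deadline multiplier. I would then restrict to the regime where the deadline is loose enough that this common speed stays below $\fr$: there the optimum for a given $S$ runs every non-re-executed task at exactly $\fr$ and every re-executed task at exactly $\finf$, so re-executing $T_i$ costs the extra time $\tau_i=\frac{2w_i}{\finf}-\frac{w_i}{\fr}>0$ and saves the energy $v_i=w_i\fr^{2}-2w_i(\finf)^{2}$, the sign of $v_i$ being positive exactly when $\finf<\frac1{\sqrt2}\fr$, i.e.\ exactly the condition of Lemma~\ref{WC_re-ex}. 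In this regime \chaincont becomes: choose $S$ maximizing $\sum_{i\in S}v_i$ subject to $\sum_{i\in S}\tau_i\le D-\frac1{\fr}\sum_i w_i$, a knapsack-type problem.

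For the NP-hardness I would reduce from \textsc{2-Partition}: given positive integers $a_1,\dots,a_n$ with $\sum a_i=2A$, decide whether some subset sums to $A$. I would take $\tilde{d}=0$ (admissible since $\tilde{d}\ge0$), which makes $\finf=\sqrt{\lambda_0\fr w_i}$, associate one task to each $a_i$, and choose the weights $w_i$ together with the global constants $\lambda_0,\fr,\fmin,\fmax$, the deadline $D$ and the energy target $E^{*}$ so that the knapsack sizes $\tau_i$ encode the $a_i$, the capacity $D-\frac1{\fr}\sum_i w_i$ equals the contribution of a half-sum, and $E^{*}$ is attained if and only if some subset of the $a_i$ sums to exactly $A$. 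One keeps $\lambda_0 w_i<\fr/2$ so that $\finf<\frac1{\sqrt2}\fr$ (re-execution is worth considering), $\fmin$ small and $\fmax$ large so that no speed bound is ever active, and one checks that all the numbers produced have polynomially many bits.

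The main obstacle is that the knapsack ``size'' $\tau_i$ and ``profit'' $v_i$ are \emph{both} fixed functions of the single free per-task parameter $w_i$ (explicitly, for $\tilde{d}=0$, $\tau_i=2\sqrt{w_i/(\lambda_0\fr)}-w_i/\fr$ and $v_i=\fr w_i(\fr-2\lambda_0 w_i)$), so they cannot be prescribed independently, and making them simultaneously consistent with an arbitrary \textsc{2-Partition} instance is the part of the construction that has to be engineered carefully --- for instance by taking $w_i$ proportional to $a_i^{2}$ so that $\finf$ and hence $\tau_i$ are rational, by choosing $\lambda_0$ small enough that the quadratic correction terms in $\tau_i$ and $v_i$ cannot change which subset sums reach the target, and, if one task per item does not provide enough freedom, by using a two-task gadget per item (one task whose re-execution is forced, one optional) to recover the missing degree of freedom. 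Verifying that the resulting instance is polynomial and that its optimal energy crosses $E^{*}$ exactly when the partition exists is the only genuinely delicate step.

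It remains to argue that the problem is not known to lie in NP. Even once $S$ is guessed, the speeds prescribed by Lemma~\ref{WC_re-ex} are roots of the transcendental equation $2(dX+\ln X)=(df_i+\ln f_i)+(df'_i+\ln f'_i)$, and already for $\tilde{d}=0$ a re-executed task runs for a time $2\sqrt{w_i/(\lambda_0\fr)}$; thus an optimal schedule need not admit a description with polynomially many bits, and merely checking that a candidate schedule meets the deadline amounts to comparing a sum of square roots (or, for $\tilde{d}>0$, of exponentials) with a rational number --- a test not known to be performable in polynomial time, which is precisely the obstruction that keeps, for example, Euclidean optimization problems involving sums of square roots out of the class of problems \emph{known} to be in NP. Hence \chaincont is NP-hard but not known to be in NP.
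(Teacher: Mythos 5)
Your overall strategy (use Lemmas~\ref{single_speed} and~\ref{WC_re-ex} to restrict the solution space, then reduce from a partition-type problem via the deadline/energy trade-off) is the same as the paper's, which reduces from SUBSET-SUM with $d=0$ and $\fr=\fmax$. But the structural claim on which your reduction rests is incorrect, and it is incorrect exactly where all the work lies. You assert that, for a fixed set of re-executed tasks and a ``loose enough'' deadline, the optimum runs every re-executed task at its reliability floor $\finf$, so that re-execution has a fixed time cost $\tau_i$ and a fixed energy saving $v_i$, and the problem becomes a 0--1 knapsack. In the regime where the knapsack capacity actually bites --- the only regime in which the choice of which tasks to re-execute is a nontrivial decision --- the deadline constraint is tight, and the first-order conditions then place every re-executed task at a \emph{common speed determined by the deadline multiplier}, strictly above $\finf$, with the reliability constraint slack. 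This is precisely what the paper's proof establishes: writing $W=\sum_i w_i$ and $Y$ for the total re-executed weight, the optimal energy is $E(Y)=\bigl(\frac{(2Y)^3}{(D_0\fr-(W-Y))^2}+(W-Y)\bigr)\fr^2$, the re-execution speed being $\frac{2Y}{D_0\fr-(W-Y)}\fr$, not $\finf$. Because speeds are continuous, ``which tasks to re-execute'' cannot be decoupled from ``at what speed'': a set whose $\tau$-sum slightly exceeds your capacity can still be re-executed by raising the common speed slightly, at a continuously varying energy cost. So the problem does not collapse to a knapsack, and hardness cannot be inherited that way.

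The paper's way around this also dissolves the difficulty you flag about $\tau_i$ and $v_i$ both being forced functions of $w_i$ (with $\tau_i$ involving $\sqrt{w_i}$). The energy bound $E_0$ is set equal to the \emph{unique minimum} of $E(Y)$, attained at $Y=c(D_0\fr-W)=X$ where $c$ is the positive root of $7y^3+21y^2-3y-1$; a schedule then meets the energy bound if and only if the re-executed weights sum to exactly $X$, which is the SUBSET-SUM question. In that construction the re-execution speed is $\frac{2c}{1+c}\fr$, independent of $i$, so both the extra time and the energy of a re-executed task are \emph{linear} in $w_i=a_i$, and no $a_i^2$ rescaling or two-task gadget is needed. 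Your discussion of why the problem is not known to be in NP (real-valued speeds, sum-of-roots comparisons) matches the paper's and is fine. As it stands, though, the hardness half of your argument has a genuine gap: the ``only genuinely delicate step'' you defer is the entire content of the proof, and the knapsack framing you propose for carrying it out is not a correct model of the optimum.
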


Note that the problem is not known to be in NP because speeds could
take any real values (\continuous model). The completeness comes from
SUBSET-SUM~\cite{GareyJohnson}. The problem is NP-hard even for a
linear chain application mapped on a single processor (and any general
DAG mapped on a single processor becomes a linear chain). 

\begin{proof}
  Consider the associated decision problem: given a deadline, and
  energy and reliability bounds, can we schedule the graph to match
  all these bounds? Since the speeds could take any real values, the
  problem is not known to be in NP. For the
  completeness, we use a reduction from
  SUBSET-SUM~\cite{GareyJohnson}. Let $\II_1$ be an instance of
  SUBSET-SUM: given $n$ strictly positive integers $a_1, \ldots, a_n$,
  and a positive integer $X$, does there exist a subset $I$ of $\{1,
  \ldots, n\}$ such that $\sum_{i\in I}a_i= X $? Let $S=\sum_{i=1}^n
  a_i$.

 \smallskip
  We build the following instance~$\II_2$ of our problem. The
  execution graph is a linear chain with $n$~tasks, where:

\begin{compactitem}
\item task $T_i$ has weight $w_i=a_i$;
\item $\lambda_0 = \frac{\fmax}{100 \max_i a_i}$;
\item $\fmin = \sqrt{\lambda_0 \max_i a_i \fmax} = \frac{1}{10}\fmax$;
\item $\fr=\fmax$; $d=0$. 
\end{compactitem}

\medskip
\noindent The bounds on reliability, deadline and energy are: 
\begin{compactitem}
\item $R^0_i = R_i(\fr) = 1 - \lambda_0 \frac{w_i}{\fr}$ for $1\leq i
  \leq n$;
\item $D_0=  \frac{S}{\fr} + \frac{X}{c \fr}$, where
$c$ is the unique positive real root of the polynomial \mbox{$7y^{3}+21y^{2}-3y-1$.}
Analytically, we derive that 
$c = 4 \sqrt{\frac{2}{7}} \cos{\frac{1}{3}(\pi-\tan^{-1}{\frac{1}{\sqrt{7}}})}-1 ~ (\approx 0.2838)$;
but this value is irrational, so have to we encode it symbolically
rather than numerically; 
\item $E_0= 2X (\dfrac{2c}{1+c}\fr )^2 + (S - X) \fr^2$.
\end{compactitem}

\medskip
Clearly, the size of $\II_2$ is polynomial in the size of~$\II_1$. 

\medskip
\emph{Suppose first that instance $\II_1$ has a solution,~$I$.} For
all $i\in I$, $T_i$ is executed twice at
speed~$\dfrac{2c}{1+c}\fr$. Otherwise, for all $i \notin I$, it is
executed only once at speed~$\fr$.  The execution time is
$\sum_{i\notin I} \frac{a_i}{\fr} + \sum_{i \in I}
2\frac{a_i}{\frac{2c}{1+c}\fr} = \frac{S-X}{\fr} +
2X\frac{1+c}{2c\fr} = D_0$.  The reliability constraint is obviously
met for tasks not in~$I$. It is also met for all tasks in~$I$, since
$\dfrac{2c}{1+c}\fr > \fmin$, and two executions at \fmin are sufficient to
match the reliability constraint.  Indeed,
$1-\lambda_0^2\frac{a_i^2}{\fmin^2} = 1-\lambda_0\frac{a_i}{\fr}
 \frac{a_i}{\max_i a_i} \geq 1-\lambda_0\frac{a_i}{\fr} =
R^0_i$.  The energy consumption is exactly~$E_0$. All bounds are
respected, and therefore we have a solution to~$\II_2$.

\medskip
\emph{Suppose now that $\II_2$ has a solution.}  Let $I=\{i\; |\;
T_i\mbox{ is executed twice in the solution}\}$, and \mbox{$Y=\sum_{i\in
  I}a_i$.}  We prove in the following that necessarily $Y=X$, since the
energy constraint~$E_0$ is respected in~$\II_2$. 

We first point out that tasks executed only once are necessarily
executed at maximum speed to match the reliability constraint.  Then
consider the problem of minimizing the energy of a set of tasks, some
executed twice, some executed once at maximum speed, and assume that
we have a deadline~$D_0$ to match, but no constraint on reliability or
on \fmin. We will verify later that these additional two constraints
are indeed satisfied by the optimal solution when the only constraint
is the deadline.  Thanks to Lemma~\ref{WC_re-ex}, for all $i
\in I$,  task~$T_i$ is executed twice at the same speed. It is easy to
see that in fact all tasks in~$I$ are executed at the same speed,
otherwise we could decrease the energy consumption without modifying
the execution time, by convexity of the function. Let~$f$ be the speed
of execution (and re-execution) of task~$T_i$, with $i\in I$. 
Because the deadline is the only constraint, either $Y=0$
(no tasks are re-executed), or it is optimal to exactly match the
deadline~$D_0$ 
(otherwise we could just slow down all the re-executed tasks and
this would decrease the total energy).  Hence the problem amounts to
find the values of $Y$ and $f$ that minimize the function $E = 2Y f^2
+ (S - Y) \fr^2$, with the constraint $(S - Y)/\fr +2Y/f \leq D_0$.
First, note that if $Y=0$ then $E>E_0$, and hence $Y>0$ (since it
corresponds to a solution of~$\II_2$).   
Therefore, since the deadline is tight, we have $f= \frac{2Y}{D_0\fr-
  (S - Y)}\fr$, and finally the energy consumption can be expressed as 
\[
E(Y) = \left(\frac{(2Y)^3}{(D_0\fr- (S - Y))^2} + (S-Y) \right)\fr^2. 
\]

We aim at finding the minimum of this function. Let  
$\tilde{Y} = \frac{Y}{D_0\fr - S}$. Then we have 
$E(\tilde{Y}) = \left(\frac{(2\tilde{Y})^3}{(1+\tilde{Y})^2} + 
(\frac{S}{D_0\fr-S}-\tilde{Y}) \right) \times (D_0\fr-S)\fr^2$.
Differentiating,  we obtain
$$E'(\tilde{Y}) = 
\left (\frac{3\times 2^3\tilde{Y}^2}{(1+\tilde{Y})^2} - 
\frac{2^4 \tilde{Y}^3}{(1+\tilde{Y})^3} -1\right) (D_0\fr-S)\fr^2 \; .$$
Finally, $E'(\tilde{Y})=0$ if and only if 
\begin{equation}
	\label{def_c}
24\tilde{Y}^2(1+\tilde{Y}) - 16 \tilde{Y}^3 -(1+\tilde{Y})^3 = 0.
\end{equation}
The only positive solution of Equation~(\ref{def_c}) is $\tilde{Y}=c$,
and therefore the unique minimum of $E(Y)$ is obtained for 
$Y = c(D_0\fr-S) = X$.   

Note that for $Y=X$, we have $E=E_0$, and therefore any other
value of~$Y$ would not correspond to a solution.  
There remains to check that the 
solution matches both constraints on \fmin and on reliability, to
confirm the hypothesis on the speed of tasks that are
re-executed. Using the same argument as in the first part of the
proof, we see that the reliability constraint is respected when a task
is executed twice at~\fmin, and therefore we just need to check that
$f\geq \fmin$. For $Y=X$, we have
$f= \frac{2c}{1+c}\fr > \fmin$.  

 Altogether, we have $\sum_{i\in I}a_i = Y = X$, and therefore
$\II_1$~has a solution. This concludes the proof.
\end{proof}

Even if \chaincont is NP-hard, we can characterize an optimal solution
of the problem: 
\begin{prop}
\label{prop_WC_fr}
If $\fr < \fmax$, then in any optimal solution of \chaincont, 
either all tasks are executed only once, at constant speed
$\max(\frac{\sum_{i=1}^n w_i}{D}, \fr)$; or at least one task is
re-executed, and then all tasks that are not re-executed are
executed at speed~$\fr$.
\end{prop}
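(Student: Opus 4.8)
The plan is to split according to whether the optimal solution re-executes any task. Throughout I would invoke Lemmas~\ref{single_speed} and~\ref{WC_re-ex}: every task runs at a single speed, a re-executed task uses that same speed for both attempts, a task executed once has speed at least $\fr$ (reliability), and a re-executed task has speed in $[\finf,\frac{1}{\sqrt{2}}\fr)$.

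First I would handle the case where \emph{no} task is re-executed. Here each $T_i$ runs once at some speed $f_i\ge\fr$ with $\sum_i w_i/f_i\le D$, and we minimize $\sum_i w_i f_i^2$. Since a task's energy strictly increases with its speed, if $\sum_i w_i/D<\fr$ then the constraint $f_i\ge\fr$ forces $f_i=\fr$ for every $i$, which equals $\max(\sum_i w_i/D,\fr)$; this sub-case is immediate because any feasible point already has $f_i\ge\fr$. Otherwise $\sum_i w_i/D\ge\fr$, the uniform speed $\sum_i w_i/D$ is feasible, and I would argue (i) the deadline is tight at the optimum, since otherwise one could slow some task while keeping $f_i\ge\fr$ and strictly lower the energy, and (ii) among solutions saturating the deadline a convexity/Lagrange argument forces all speeds equal: writing $t_i=w_i/f_i$, the energy is $\sum_i w_i^3/t_i^2$ under $\sum_i t_i=D$, whose unique critical point has $w_i^3/t_i^3=f_i^3$ constant. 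Hence all speeds equal $\sum_i w_i/D=\max(\sum_i w_i/D,\fr)$. (Should $\sum_i w_i/D>\fmax$, no single-execution solution exists and this branch is vacuous.) Equivalently one may phrase (ii) as a single Hölder inequality.

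Next I would treat the case where at least one task is re-executed, and prove that every task $T_j$ executed only once runs at speed exactly $\fr$. The bound $f_j\ge\fr$ is the reliability constraint. For the converse, suppose $f_j>\fr$ and pick a re-executed task $T_k$, whose speed satisfies $f_k<\frac{1}{\sqrt{2}}\fr<\fr<f_j$. If the deadline is not tight, reducing $f_j$ slightly (keeping it $\ge\fr$) strictly decreases the energy, a contradiction. If the deadline is tight, I would transfer a small amount $\delta>0$ of time from $T_k$ to $T_j$: with $t_j=w_j/f_j$ and $t_k=2w_k/f_k$, the pair's energy is $w_j^3/t_j^2+8w_k^3/t_k^2$, and its derivative in $\delta$ at $\delta=0$ equals $-2f_j^3+2f_k^3<0$, so the energy again strictly decreases, contradicting optimality.

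The routine parts (the convexity computation in the first case and the derivative $2(f_k^3-f_j^3)$ in the exchange) are mechanical. The step needing the most care is the feasibility bookkeeping for the exchange argument of the second case: after the perturbation one must check that $T_k$ still meets its reliability bound and stays below $\frac{1}{\sqrt{2}}\fr$ (so re-executing it remains consistent with Lemma~\ref{WC_re-ex}), that $T_j$ stays at or above $\fr$, and — here the hypothesis $\fr<\fmax$ is used — that no speed is pushed past $\fmax$. All of these hold for $\delta$ small enough precisely because the relevant inequalities are strict before the perturbation, which closes the argument.
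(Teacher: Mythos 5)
Your proof is correct and follows essentially the same route as the paper's: the same case split, the same convexity argument when no task is re-executed, and the same exchange of slack between a re-executed task and a once-executed task running above $\fr$. The only difference is cosmetic — you run the exchange infinitesimally via the marginal energy cost $2f^3$ per unit of time (giving the derivative $2(f_k^3-f_j^3)<0$ at $\delta=0$), whereas the paper carries out the same trade with an explicit finite $\epsilon$ and a longer computation of $\phi'(\epsilon)$; your feasibility bookkeeping for small $\delta$ is exactly the point that makes the short version rigorous.
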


\begin{proof}
  Consider an optimal schedule. If all tasks are executed only once,
  the smallest energy consumption is obtained when using the constant
  speed $\frac{\sum_{i=1}^n w_i}{D}$. However if $\frac{\sum_{i=1}^n
    w_i}{D}<\fr$, then we have to execute all tasks at speed $\fr$ to
  match both reliability and deadline constraints.

  \medskip Now, assume that some task $T_i$ is re-executed, and assume
  by contradiction, that some other task $T_j$ is executed only once
  at speed $f_j>\fr$. Note that the common speed $f_{i}$ used in both
  executions of $T_i$ is smaller than $\fr$, otherwise we would not
  need to re-execute $T_{i}$.  We have $f_i < \fr < f_j$, and we prove
  that there exist values $f'_{i}$ (new speed of one execution
  of~$T_{i}$) and $f'_{j}$ (new speed of~$T_{j}$) such that $f_i <
  f'_i$,  $\fr \leq f'_j < f_j$, and the energy consumed with the new
  speeds is strictly smaller, while the execution time is unchanged.
  The constraint on reliability will also be met, since the speed of
  one execution of~$T_i$ is increased, while the speed of~$T_j$
  remains above the reliability threshold.  Note that we do not modify
  the speed of the re-execution of~$T_i$ (that remains~$f_i$), and the
  time and energy consumption of this execution are not accounted for
  in the equations. Also, we restrict to values such that $f'_i \leq f'_j$. 

\smallskip
Our problem writes: do there exist $\epsilon,~\epsilon'> 0$ such that
$$\begin{array}{l}
w_i f_i^2 + w_j f_j^2 \;>\; w_i (f_i+\epsilon')^2 + w_j
(f_j-\epsilon)^2;\\[.2cm] 
 D \;=\; \displaystyle\frac{w_i}{f_i} + \frac{w_j}{f_j} =
 \frac{w_i}{f_i+\epsilon'} + \frac{w_j}{f_j-\epsilon};\\[.3cm]
f_i \;<\; f_i+\epsilon' \;\leq\; f_j-\epsilon;\\
\fr \;\leq\; f_j-\epsilon \;<\; f_j .
\end{array}$$ 

We study the function $\phi : \epsilon \mapsto  w_i f_i^2 + w_j f_j^2 -
\left( w_i (f_i+\epsilon')^2 + w_j (f_j-\epsilon)^2 \right)$, and we
want to prove that it is positive. Thanks to the deadline constraint
($D$~is the bound on the execution time of~$T_j$ plus one execution
of~$T_i$), 
we have $f_i = \frac{w_i f_j}{Df_j - w_j}$, and 
$f_i + \epsilon' = \frac{w_i}{D-\frac{w_j}{f_j - \epsilon}} 
   = \frac{w_i(f_j - \epsilon)}{D (f_j - \epsilon)- w_j}$. 

We can therefore express $\phi(\epsilon)$ as: 
$$\phi(\epsilon)= \frac{w_i^3 f_j^2}{(D f_j - w_j)^2} -
\frac{w_i^3 (f_j- \epsilon)^2}{(D(f_j-\epsilon) -w_j)^2} +
w_j f_j^2 - w_j (f_j-\epsilon)^2. $$

Moreover, we study the function for $\epsilon>0$, and because of the
constraint on new speeds, $\epsilon \leq f_j - \fr$. Another bound on
$\epsilon$ is obtained from the fact that $f_i+\epsilon' \leq f_j
-\epsilon$, and the equality is obtained when both tasks are running
at speed~$ \frac{w_i+w_j}{D}$, thus meeting the deadline. Hence, $f_j
-\epsilon \geq \frac{w_i+w_j}{D}$, and finally
$$0< \epsilon \leq f_j - \max\left(\fr,
  \dfrac{w_i+w_j}{D}\right). $$

Differentiating, we obtain $$\phi'(\epsilon) = 
\frac{2w_i^3 (f_j- \epsilon)}{(D(f_j\!-\!\epsilon) \!-\! w_j)^2} -
\frac{2Dw_i^3 (f_j- \epsilon)^2}{(D(f_j\!-\!\epsilon) \!-\! w_j)^3} +2w_j
(f_j-\epsilon). $$
We are looking for $\epsilon$ such that $\phi'(\epsilon)=0$, hence
obtaining the polynomial 
$$X^3 - \frac{w_i^3}{w_j^3} = 0,$$ 
by multiplying each side of the equation by $\frac{(D(f_j-\epsilon) -
  w_j)^3}{w_j^4(f_j- \epsilon)}$, and defining
$X=\frac{D(f_j-\epsilon)-w_j}{w_j}$.  
The only real solution to this polynomial is $X = \frac{w_i}{w_j}$,
that corresponds to $\epsilon = f_j - \frac{w_i+w_j}{D}$. 
Therefore, the only extremum of the function~$\phi$ is obtained for
this value of~$\epsilon$, which corresponds to executing both tasks at
the same speed. Because of the convexity of the energy consumption,
this value corresponds to a maximum of function~$\phi$ (see for
instance Proposition~2 in~\cite{aupy12ccpe}), since the energy is
minimized when both tasks run at the same speed. Therefore, $\phi$~is
strictly increasing for $0 \leq \epsilon \leq f_j - \frac{w_i+w_j}{D}$,
and for $\epsilon = f_j - \max\left(\fr,\dfrac{w_i+w_j}{D}\right)$,
$\phi$~is maximal (with regards to our constraints), and
$\phi(\epsilon)>0$.  

\medskip

Altogether, this value of $\epsilon$ gives us two new speeds $f'_i =
\frac{w_i(f_j - \epsilon)}{D (f_j - \epsilon)- w_j}$ 
and $f'_j = f_j - \epsilon$ that strictly improve the energy consumption of the
schedule, while the constraints on deadline and reliability are still
enforced. However, the original schedule was supposed to be optimal,
we have a contradiction, which concludes the proof. 
\end{proof}

In essence, Proposition~\ref{prop_WC_fr} states that when dealing with
a linear chain, we should first slow down the execution of each task
as much as possible. Then, if the deadline is not too tight, i.e., if
$\fr > \frac{\sum_{i=1}^n w_i}{D}$, there remains the possibility to
re-execute some of the tasks (and of course it is NP-hard to decide
which ones). Still, this general principle \emph{``first slow-down
and then re-execute''} will guide the design of type A heuristics in
Section~\ref{heurWC}.

While the general \tricritcont problem is NP-hard even with a single
processor, the particular variant \forkcont can be solved in
polynomial time: 
\begin{theorem}
\label{thm_fork}
The \forkcont problem can be solved in polynomial time. 
\end{theorem}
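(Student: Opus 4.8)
The plan is to reduce \forkcont to a one-dimensional optimization over the finish time $t_0$ of the source task $T_0$. By Lemmas~\ref{single_speed} and~\ref{WC_re-ex} we may assume that every task (the source and the $n$ leaves) runs at a single speed, and that a re-executed task uses that same speed for both of its executions. Since $T_0$ precedes all leaves and every task sits on its own processor, once $T_0$ finishes at time $t_0$ each leaf $T_i$ ($1\le i\le n$) has exactly the window $[t_0,D]$, i.e.\ a time budget $\delta = D - t_0$, in which to perform its one or two executions, independently of the other leaves. Thus a schedule is completely described by: (i) a choice of $t_0$; (ii) for $T_0$, whether it is executed once or twice and at which speed (its total running time being $t_0$, and its reliability constraint met); and (iii) for each leaf, the cheapest way of meeting its own reliability constraint within time $\delta$.

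First I would analyse a single task in isolation. For a task of weight $w$ with an available time $\tau$, let $\mathcal{E}(w,\tau)$ be the minimum energy needed to satisfy $R\ge R(\fr)$ within time $\tau$. There are exactly two structural options: one execution at speed $f_1$ with $w/f_1\le\tau$, $\fmin\le f_1\le\fmax$, $f_1\ge\fr$; or two executions at a common speed $f_2$ with $2w/f_2\le\tau$, $\fmin\le f_2\le\fmax$, and $f_2\ge\finfc$, the per-task reliability threshold of Equation~(\ref{eq.fmin}). Each option is a trivial convex program that drives the speed down to the largest active lower bound, so the one-execution optimum is $w\cdot(\max(\fr,w/\tau))^{2}$ and the two-execution optimum is $2w\cdot(\max(\finfc,2w/\tau))^{2}$ (each valid only when the resulting speed is $\le\fmax$, and, by Lemma~\ref{WC_re-ex}, the two-execution branch can only be the cheaper of the two when its speed is below $\fr/\sqrt{2}$). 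Consequently $\tau\mapsto\mathcal{E}(w,\tau)$ is non-increasing and piecewise, with $O(1)$ explicitly computable breakpoints (of the form a constant times $w/\fmax$, $w/\fr$ or $w/\finfc$, plus the one-vs-two crossover), and on each piece it is either a constant or of the form $a/\tau^{2}$. The same analysis applied to $T_0$, viewed as a function of its own time, yields a function $\mathcal{E}_0(t_0)$ of the same type (with a constant number of pieces).

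Next I would assemble the global objective $E(t_0) = \mathcal{E}_0(t_0) + \sum_{i=1}^{n}\mathcal{E}(w_i,\,D-t_0)$, over the feasible range of $t_0\in(0,D)$ (itself an interval, obtained from the $\le\fmax$ constraints on $T_0$ and on every leaf). Collecting the $O(1)$ breakpoints of each $\mathcal{E}(w_i,\cdot)$, translated through $\delta=D-t_0$, together with those of $\mathcal{E}_0$, gives a set of $O(n)$ values of $t_0$; sorting them in $O(n\log n)$ time yields $O(n)$ elementary subintervals. On any one subinterval every task is locked into one of its pieces, so the objective has the closed form
\[
E(t_0) \;=\; \frac{A}{t_0^{2}} \;+\; \frac{B}{(D-t_0)^{2}} \;+\; C ,
\]
where $A,B,C\ge 0$ are fixed on the subinterval: $A\in\{0,\,w_0^{3},\,8w_0^{3}\}$ is the source contribution, $B$ sums the $w_i^{3}$-type coefficients ($w_i^{3}$ for a single-execution leaf, $8w_i^{3}$ for a re-executed one) of the leaves currently in an ``$a/\delta^{2}$'' regime, and $C$ collects the constant contributions. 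This function is convex, so on the subinterval it attains its minimum either at an endpoint or at its unique stationary point, which solves $B\,t_0^{3}=A\,(D-t_0)^{3}$, namely $t_0^{\star} = \frac{A^{1/3}}{A^{1/3}+B^{1/3}}\,D$ (if $A=0$ it is decreasing towards the left endpoint in $t_0$, hence minimized at the left endpoint; if $B=0$ it is minimized at the right endpoint). Evaluating $E$ at the at most three candidate points of each of the $O(n)$ subintervals and returning the best candidate solves \forkcont in polynomial time.

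The main obstacle is the bookkeeping in the second step: one must pin down $\mathcal{E}(w,\tau)$ exactly, enumerating the handful of regimes generated by the lower bounds $\fr$, $\finfc$, $\fmin$, the upper bound $\fmax$, and the one-versus-two-execution choice, and check the relative order of their breakpoints, so that the locked form $A/t_0^{2}+B/(D-t_0)^{2}+C$ on each subinterval is genuinely correct (including the degenerate cases where a leaf or the source is forced to $\fr$, $\finfc$ or $\fmax$, or where the budget is too small for any feasible execution). Allowing $T_0$ itself to be re-executed adds only a constant number of extra regimes and does not change the $O(n)$ count. A minor technical point is that comparing objective values across subintervals involves cube roots, but this is done exactly in polynomial time since every candidate minimizer is a root of a polynomial of bounded degree with coefficients polynomial in the input.
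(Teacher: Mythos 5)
Your proposal is correct and follows essentially the same route as the paper: first characterize the optimal single-task energy as a piecewise function of its time budget (constant or $a/\tau^2$ on each of $O(1)$ pieces, with breakpoints at $w/\fmax$, $w/\fr$, $2\sqrt{2}w/\fr$, $2w/\finfc$), then reduce the fork to a one-dimensional split of the deadline between the source and the leaves, sort the $O(n)$ breakpoints, and on each subinterval minimize the convex function $A/t_0^2 + B/(D-t_0)^2 + C$ at its stationary point or an endpoint. The paper's proof is structured identically (single task, then $n$ independent tasks, then the deadline-sharing step), so no further comparison is needed.
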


The difficulty to provide an optimal algorithm for the \forkcont
problem comes from the fact that the total execution
time must be shared between the source of the fork,~$T_0$, and the other tasks that
all run in parallel. If we know~$D'$, the fraction of the deadline
allotted for tasks $T_1,\dots,T_n$ once the source has finished its
execution, then we can decide which tasks are re-executed and all
execution speeds. 

\begin{proof} 
  We start by showing that
  \tricritcont can be solved in polynomial time for one single task,
  and then for $n$ independent tasks, before tackling the problem
  \forkcont.  

\paragraph{\tricritcont for a single task on one processor 
can be solved in polynomial time.}

When there is a single task $T$ of weight $w$, the solution depends on
the deadline~$D$:  
\begin{enumerate}
    \item if $D<\frac{w}{\fmax} = D^{(0)}$, then there is no solution;
    \item if $\frac{w}{\fmax} \leq D\leq \frac{w}{\fr}= D^{(1)}$, then
      $T$ is executed once at speed $\frac{w}{D}$, the minimum energy
      is $w^3\times\frac{1}{D^2}$;
    \item if $\frac{w}{\fr} < D \leq \frac{2\sqrt{2}w}{\fr}=
      D^{(2)}$, then $T$ is executed once at speed $\fr$, the minimum
      energy is $w\fr^2$;
    \item if $\frac{2\sqrt{2}w}{\fr} < D \leq
      \frac{2w}{\finfc}= D^{(3)}$, then $T$ is executed
      twice at speed $\frac{2w}{D}$, the minimum energy is
      $(2w)^3\times\frac{1}{D^2}$;
    \item if $\frac{2w}{\finfc} < D$, then $T$ is
      executed twice at speed $\finfc$, the minimum energy
      is $2wf^{(\texttt{inf})2}$.
\end{enumerate}
These results are a direct consequence from the deadline and
reliability constraints. With a deadline smaller than~$D^{(0)}$, the
task cannot be executed within the deadline, even at speed~\fmax. 
The bound~$D^{(2)}$ comes from Lemma~\ref{WC_re-ex}, 
which states that we need to have enough time to execute the task
twice at a speed lower than $\frac{1}{\sqrt2}\fr$ before re-executing
it. Therefore, the task is executed only once for smaller deadlines,
either at speed~$w/D$, or at speed~\fr if $w/D<\fr$. For larger
deadlines, the task is re-executed, either at speed~$2w/D$, or at
speed~\finfc if $2w/D<\finfc$, since the re-execution speed cannot be
lower than~\finfc (see Lemma~\ref{WC_re-ex}). 

\paragraph{\tricritcont for $n$ independent tasks on $n$ processors
can be solved in polynomial time.}

For $n$ independent tasks mapped on $n$ distinct processors, decisions
for each task can be made independently, and we simply solve $n$~times
the previous single task problem. The minimum energy is the sum
of the minimum energies obtained for each task.

\paragraph{\forkcont.}
For a fork, we need to decide how to share the deadline between the
source~$T_0$ of the fork and the other tasks (i.e., $n$ independent
tasks on $n$~processors). We search the optimal values $D_1$ and $D_2$
such that $D_1+D_2=D$, and the energy of executing $T_0$ within
deadline~$D_1$ plus the energy of executing all other tasks
within~$D_2$ is minimum. Therefore, we just need to find the optimal
value for~$D_2$ (since $D_1=D-D_2$), and reuse previous results for
independent tasks. 

Independently of $D$, we can define for each task~$T_i$ four values
$D_i^{(0)}, D_i^{(1)}, D_i^{(2)}$ and $D_i^{(3)}$, as in the case of a
single task. There is a solution if and only if $\max_{1\leq i \leq n}
D_i^{(0)}\leq D_2\leq D-D_0^{(0)}$. Then, the energy consumption
depends upon the intervals delimited by values $D-D_0^{(j)}$ and
$D_i^{(j)}$, for $1\leq i \leq n$ and $j=1,2,3$.
Within an interval, the energy consumed by the source is either a
constant, or a constant times $\frac{1}{(D-D_2)^2}$, and the energy
consumed by task~$T_i$ ($1\leq i \leq n$) is either a constant, or 
a constant times~$\frac{1}{D_2^2}$.  
All the constants are known, only dependent of~$T_i$, and they are
obtained by the algorithm that gives the optimal solution to
\tricritcont for a single task.  
To obtain the intervals, we sort the $4n$ values of~$D_i^{(j)}$
($i>0$) and the four values of $D-D_0^{(j)}$, with $j=0,1,2,3$, and
rename these $4(n+1)$ values as~$d_k$, with $1\leq k \leq 4(n+1)$ and
$d_k \leq d_{k+1}$.   
Given the bounds on~$D_2$, we consider the intervals of
the form $[d_k,d_{k+1}]$, with $d_k \geq \max_{1\leq i \leq n}
D_i^{(0)}$, and $d_{k+1} \leq D-D_0^{(0)}$. 
On each of these intervals, 
the energy function is $\frac{K}{(D-D_2)^2} + \frac{K'}{D_2^2} + K''$,
where $K$, $K'$ and $K''$ are positive constants that can be obtained
in polynomial time by the solution to \tricritcont for a single task.
Finding a minimum to this function on the interval $[d_k,d_{k+1}]$ 
can be done in polynomial time: 
\begin{itemize}
\item the first derivative of this function is $\frac{2K}{(D-D_2)^3} -
  \frac{2K'}{D_2^3}$;
\item the function is convex on $]0,D[$, indeed the second derivative
  of this function is $\frac{6K}{(D-D_2)^4} + \frac{6K'}{D_2^4}$,
  which is positive on $]0,D[$, and therefore on the interval
  $[d_k,d_{k+1}]$, there is exactly one minimum to the energy function
  ($d_k>0$ and $d_{k+1}<D$); 
\item the minimum is obtained either when the first derivative is
  equal to zero in the interval (i.e., if there is a solution to the
  equation $2KD_2^3 - 2K'(D-D_2)^3 = 0$ in $[d_k,d_{k+1}]$), or the
  minimum is reached at $d_k$ (resp. $d_{k+1}$) if the first
  derivative is positive (resp. negative) on the interval.  
\end{itemize}

There are $O(n)$ intervals, and it takes constant time to find the
minimum energy~$E_k$ within interval~$[d_k,d_{k+1}]$, as explained
above, by solving one equation. Since we have partitioned the interval
of possible deadlines $D_2 \in \left[\max_{1\leq i \leq n} D_i^{(0)},
  D-D_0^{(0)}\right]$, and obtained the minimum energy consumption in
each sub-interval, the minimum energy consumption for the fork graph
is $\min_k E_k$, and the value of~$D_2$ is obtained where the minimum
is reached. Once we know the optimal value of~$D_2$, it is easy to
reconstruct the solution, following the algorithm for a single task,
in polynomial time. 
\end{proof}

Note that this algorithm does not provide any closed-form formula for the
speeds of the tasks, and that there is an intricate case analysis due
to the reliability constraints.

If we further assume that the fork is made of identical tasks (i.e.,
$w_i=w$ for $0\leq i \leq n$), then we can provide a closed-form
formula. However, Proposition~\ref{prop_WC_fork} illustrates the
inherent difficulty of this {\em simple} problem, with several cases
to consider depending on the values of the deadline, and also the
bounds on speeds ($\fmin$, $\fmax$, $\fr$, etc.). First, since the
tasks all have the same weight~$w_i=w$, we get rid of the $\finf$
introduced above, since they are all identical (see
Equation~(\ref{eq.fmin})): $\finf=\finfc$ for $0\leq i \leq n$. 
Therefore we let $\fmin = \max(\fmin,\finfc)$ in the proposition
below:

\begin{prop}
\label{prop_WC_fork}
In the optimal solution of \forkcont with at least three identical
tasks (and hence $n\geq 2$), 
there are only three possible scenarios: (i) no task is re-executed; 
(ii) the $n$ successors are all re-executed
but not the source; (iii) all tasks are re-executed.
In each scenario, the source is executed at speed $\sss$ (once or
twice), and the $n$ successors are
executed at the same speed $\ttt$ (once or twice). \\
For a deadline $D<\frac{2w}{\fmax}$, there
is no solution. For a deadline $D \in \left[\frac{2w}{\fmax}, \frac{w}{\fr}
\frac{(1+2n^{\frac{1}{3}})^{\frac{3}{2}}}{\sqrt{1+n}}\right]$, no task
is re-executed (scenario (i)) and the values of $\sss$ and $\ttt$ are
the following:
\begin{compactitem} 
\item if $\frac{2w}{\fmax}\leq D\leq
  \min\left(\frac{w}{\fmax}(1+n^{\frac{1}{3}}),w(\frac{1}{\fr}+\frac{1}{\fmax})\right)$,
  then $\sss = \fmax$ and $\ttt= 
  \frac{w}{D\fmax-w}\fmax$;
\item if $\frac{w}{\fmax}(1+n^{\frac{1}{3}}) \leq w(\frac{1}{\fr}+\frac{1}{\fmax})$, then
\begin{compactitem}
\item if $\frac{w}{\fmax}(1+n^{\frac{1}{3}}) < D\leq \frac{w}{\fr}
  \frac{1+n^{\frac{1}{3}}}{n^{\frac{1}{3}}}$, then $\sss =
  \frac{w}{D}(1+n^{\frac{1}{3}})$ and $\ttt =
  \frac{w}{D}\frac{1+n^{\frac{1}{3}}}{n^{\frac{1}{3}}}$;
\item if $\frac{w}{\fr} \frac{1+n^{\frac{1}{3}}}{n^{\frac{1}{3}}}
  <D\leq \frac{2w}{\fr}$, then $\sss = \frac{w}{D\fr-w}\fr$ and $\ttt
  = \fr$;
\end{compactitem}
\item if $\frac{w}{\fmax}(1+n^{\frac{1}{3}}) > w(\frac{1}{\fr}+\frac{1}{\fmax})$, then
\begin{compactitem}
\item if $w(\frac{1}{\fr}+\frac{1}{\fmax}))
  <D\leq \frac{2w}{\fr}$, then $\sss = \frac{w}{D\fr-w}\fr$ and $\ttt
  = \fr$;
\end{compactitem}
\item if $\frac{2w}{\fr} <D\leq \frac{w}{\fr}
  \frac{(1+2n^{\frac{1}{3}})^{\frac{3}{2}}}{\sqrt{1+n}}$, then $\sss =
  \ttt = \fr$.
\end{compactitem}
\end{prop}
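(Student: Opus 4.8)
The plan is to reduce \forkcont with identical tasks to a one-dimensional optimization over the deadline split $D = D_1 + D_2$, where $D_1$ is the time allotted to the source $T_0$ and $D_2$ the time allotted to the $n$ parallel successors, and then to carry out an explicit case analysis. By Lemma~\ref{single_speed} and Lemma~\ref{WC_re-ex}, in any optimal solution each task runs at a single speed and, if re-executed, uses that same speed twice; moreover since all successors are identical and run in parallel, a convexity argument (identical to the one used repeatedly above: if two parallel identical tasks ran at different speeds we could average and strictly decrease energy without changing the finish time) forces all successors to share a common speed \ttt and a common re-execution decision. So the configuration is fully described by \sss, \ttt, and the two binary flags ``source re-executed?'' and ``successors re-executed?''. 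The first step is therefore to rule out the fourth combination (source re-executed, successors not): since \fr\ is the same for all tasks and the successors dominate the parallel part, re-executing the source while forcing the successors to run once at speed $\geq\fr$ is never optimal --- one can show the energy decreases by instead re-executing the successors (or not re-executing at all), giving exactly the three scenarios (i), (ii), (iii) claimed.

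Next, for each of the three surviving scenarios I would write the energy as a function of the split. In scenario (i) the source contributes $w\,\sss^2$ (or the constant $w\fr^2$ once the reliability floor binds) with $\sss = w/D_1$, and each successor contributes $w\,\ttt^2$ with $\ttt = w/D_2$, under $D_1+D_2=D$; so $E(D_2) = K_1/(D-D_2)^2 + n K_2/D_2^2$ up to the regime where speeds hit \fmax\ from above or \fr\ from below. This is convex on $(0,D)$, exactly as in the proof of Theorem~\ref{thm_fork}, so setting the derivative to zero gives the optimal unconstrained split; here the computation is clean because all weights equal $w$, and one finds the balance condition $D_2 = n^{1/3} D_1$, i.e. $D_1 = D/(1+n^{1/3})$, $D_2 = D n^{1/3}/(1+n^{1/3})$, which yields $\sss = \frac{w}{D}(1+n^{1/3})$ and $\ttt = \frac{w}{D}\frac{1+n^{1/3}}{n^{1/3}}$. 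The remaining work is bookkeeping: intersect this interior optimum with the feasibility box ($\sss,\ttt \in [\fmin,\fmax]$ and $\geq \fr$ when executed once), which produces the sub-cases listed --- clamping \sss\ at \fmax\ gives the first sub-case, clamping \ttt\ at \fr\ gives the $\sss = \frac{w}{D\fr-w}\fr,\ \ttt=\fr$ sub-case, and the ordering of the two clamp thresholds $\frac{w}{\fmax}(1+n^{1/3})$ versus $w(\frac1\fr+\frac1\fmax)$ determines which branch of the nested \texttt{compactitem} applies. The scenario boundaries (e.g. $D = \frac{2w}{\fmax}$ for infeasibility, $D = \frac{2w}{\fr}$ for the transition into ``all run once at \fr'', and the threshold $\frac{w}{\fr}\frac{(1+2n^{1/3})^{3/2}}{\sqrt{1+n}}$) are obtained by equating the optimal energies of consecutive scenarios: the last threshold in particular comes from comparing the ``everything once at \fr'' energy $(n+1)w\fr^2$ against the best scenario-(iii) energy, whose optimal split again balances as $2w/\sss$ against $2w/\ttt$ with the same $n^{1/3}$ ratio, yielding the factor $(1+2n^{1/3})^{3/2}/\sqrt{1+n}$ after substituting $D_1+D_2=D$.

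The main obstacle is the interaction between the two ``clamping'' constraints and the reliability floor \fr: there are genuinely several topologically different ways the interior optimum $D_2 = n^{1/3} D_1$ can leave the feasible box depending on the relative sizes of \fmax, \fr, and $n$, and establishing that the listed cases are exhaustive and mutually consistent (in particular that the interval endpoints match up across scenarios, so the piecewise description is continuous in $D$) requires care. A secondary subtlety is verifying that within scenario (i) it is never beneficial to run the source at \fr\ while the successors are strictly above \fr\ (or vice versa) except at the balance point --- this is again a convexity/exchange argument but must be stated for each transition. I would organize the write-up as: (1) structural reduction to $(\sss,\ttt,\text{two flags})$ and elimination of the fourth flag combination; (2) the unconstrained convex minimization giving the $n^{1/3}$ balance; (3) scenario (i) case analysis by intersecting with the box; (4) scenario boundaries by energy comparison; with scenarios (ii) and (iii) handled by the same template as (i) after replacing $w/D_i$ by $2w/D_i$, so no genuinely new idea is needed there.
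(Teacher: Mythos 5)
Your overall strategy is the paper's: an exchange argument forcing a common leaf speed and eliminating the ``source re-executed, leaves not'' combination, the convex $n^{1/3}$-balanced split of the deadline for scenario~(i), clamping at $\fmax$ and $\fr$ to obtain the sub-cases, and an energy comparison to locate the last threshold. Two points need fixing, one of them substantive. First, the elimination of the fourth combination should be made precise: the clean argument is that swapping the (identical-weight) source's schedule with a leaf's schedule preserves makespan and reliability, so if the source consumed strictly less energy than a leaf the swap would save $(n-1)\Delta>0$; hence in any optimal solution the source's energy is at least a leaf's, and since re-execution occurs exactly when a task's energy drops below $w\fr^2$, the source cannot be re-executed while the leaves are executed once.

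Second, and more importantly, your derivation of the threshold $D_0=\frac{w}{\fr}\frac{(1+2n^{1/3})^{3/2}}{\sqrt{1+n}}$ is wrong as stated. The competing configuration is scenario~(ii), not~(iii): the source runs once and the leaves twice, so the time constraint is $\frac{w}{\sss}+\frac{2w}{\ttt}\leq D$ and the energy is $w\sss^2+2nw\ttt^2$; the balance condition $\ttt=\sss/n^{1/3}$ then gives $\sss=\frac{w}{D}(1+2n^{1/3})$ and the lower bound $E_2=\frac{w^3}{D^2}(1+2n^{1/3})^3$, and comparing with $E_1=(1+n)w\fr^2$ yields exactly $D_0$. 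Your stated balance of ``$2w/\sss$ against $2w/\ttt$'' is the scenario-(iii) constraint and would give $E=\frac{8w^3}{D^2}(1+n^{1/3})^3$, hence a different (and incorrect) threshold; the factor $1+2n^{1/3}$ cannot arise from it. Relatedly, your remark that scenarios (ii) and (iii) follow ``the same template after replacing $w/D_i$ by $2w/D_i$'' must apply the doubling only to the leaves in scenario (ii). Note also that it suffices to use a lower bound on the scenario-(ii) energy (ignoring the $\fmax$, $\fmin$ and $\sss\geq\fr$ constraints), since it is compared against the constant $(1+n)w\fr^2$; this is what makes $D_0$ a valid, though not necessarily tight, guarantee that no re-execution occurs below it.
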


Note that for larger values of $D$, depending on \fmin, we can move to
scenarios (ii) and (iii) with partial or total re-execution.  The case
analysis becomes even more painful, but remains feasible. Intuitively,
the property that all tasks have the same weight is the key to
obtaining analytical formulas, because all tasks have the same minimum
speed~$\finfc$ dictated by Equation~(\ref{eq.fmin}).  

\begin{proof}
First, we recall preliminary results:
\begin{compactitem}
\item if a task is executed only once at speed~$f$, then $\fr\leq f \leq \fmax$;
\item if a task is re-executed, then both executions are done at the
  same speed~$f$, and \mbox{$\fmin \leq f < \frac{1}{\sqrt{2}}\fr$.}
\end{compactitem}

By hypothesis, all tasks are identical: the bound on re-execution
speed accounts for \finfc as in Lemma~\ref{WC_re-ex}, since we now
have $\fmin=\max(\fmin,\finfc)$.
Therefore, if two tasks of same weight~$w$ have the same energy
consumption in the optimal solution, then they are executed the same
number of times (once or twice) and at the same speed(s).
If the energy is greater than or
equal to $w\fr^2$, then necessarily there is one execution; and if it is
lower than $w\fr^2$, then necessarily there are two executions.

First, we prove that in any solution, the energy consumed for the
execution of each successor task, also called {\em leaf}, is the same.
If it was not the case, since each task has the same weight, and since
each leaf is independent from the other and only dependent on the source of
the fork, if a leaf~$T_i$ is consuming more than another leaf~$T_j$,
then we could execute~$T_i$ the same number of times and at the same
speed than~$T_j$, hence matching the
deadline bound and the reliability constraint, and obtaining a better
solution.  Thanks to this result, we now assume that all leaves are
executed at the same speed(s), denoted~$\ttt$. The source task may be
executed at a different speed,~$\sss$.

Next, let us show that the energy consumption of the source is always
greater than or equal to that of any leaf in any optimal solution. 
First, since the source and leaves have the same weight, if we invert
the execution speeds of the source and of the leaves, then the
reliability of each task is still matched, and so is the execution
time. Moreover, the energy consumption is equal to the energy
consumption of the source plus $n$ times the energy consumption of any
leaf (recall that they all consume the same amount of energy). Hence,
if the energy consumption of the source is smaller than the one of the
leaves, permuting those execution speeds would reduce by $(n-1) \times
\Delta$ the energy, where $\Delta$ is the positive difference between
the two energy consumptions. Thanks to this result, we can say that
the source should never be executed twice if the leaves are executed
only once since it would mean a lower energy consumption for the
source (recall that $n\geq 2$). 

\medskip 

This result fully characterizes the shape of any optimal
solution. There are only three possible scenarios: (i) no task is
re-executed; (ii) the $n$ successors (leaves) are all re-executed but
not the source; (iii) all tasks are re-executed. 
We study independently the three scenarios, i.e., we aim at
determining the values of \sss and \ttt in each case. 
Conditions on the deadline indicate the shape of the solution, and we
perform the case analysis for deadlines $D\leq \frac{w}{\fr}
\frac{(1+2n^{\frac{1}{3}})^{\frac{3}{2}}}{\sqrt{1+n}}$.

\medskip
Let us assume first that the optimal solution is such that each task
is executed only once (scenario~(i)).  
From the proof of Theorem~1 in~\cite{aupy12ccpe}, 
we obtain the optimal speeds with no re-execution and without
accounting for reliability; they are given by 
the following formulas:

\begin{compactitem}
\item if $D<\frac{2w}{\fmax}$, then there is no solution, since the
  tasks executed at~\fmax exceed the deadline;
\item if $\frac{2w}{\fmax}\leq D \leq \frac{w}{\fmax}(1+n^{\frac{1}{3}})$, 
  then $\sss = \fmax$ and $\ttt = \frac{w}{D\fmax-w}\fmax$;
\item if $\frac{w}{\fmax}(1+n^{\frac{1}{3}}) < D $, then 
  $\sss = \frac{w}{D}(1+n^{\frac{1}{3}})$ and 
  $\ttt = \frac{w}{D}\frac{1+n^{\frac{1}{3}}}{n^{\frac{1}{3}}}$.
\end{compactitem}

\medskip 
Since there is a minimum speed~\fr to match the
reliability constraint, there is a condition when $\ttt<\fr$ that makes an
amendment on some of the items. Note that in all cases, if
$D>\frac{2w}{\fr}$, then both the source and the leaves are executed
at speed~\fr, i.e., $\sss=\ttt=\fr$ (recall that we consider the case
with no re-execution). 
\begin{compactitem}
\item If $\frac{2w}{\fmax}\leq D \leq
  \frac{w}{\fmax}(1+n^{\frac{1}{3}})$, then we need $\ttt=
  \frac{w}{D\fmax-w}\fmax \geq \fr$, hence the condition: $D\leq
  \min\left(\frac{w}{\fmax}(1+n^{\frac{1}{3}}),w(\frac{1}{\fr}+
    \frac{1}{\fmax})\right)$. In 
  this case, $\sss = \fmax$ and $\ttt=\frac{w}{D\fmax-w}\fmax$.
\item If $D > \min\left(\frac{w}{\fmax}(1+n^{\frac{1}{3}}),
          w(\frac{1}{\fr}+\frac{1}{\fmax})\right)$, then the previous
          results do not hold anymore because of the constraint on the
          speed of the leaves. We must further differentiate cases,
          depending on where the minimum is reached. 
\item If $\frac{w}{\fmax}(1+n^{\frac{1}{3}}) \leq
  w(\frac{1}{\fr}+\frac{1}{\fmax})$, then 
  \begin{compactitem}
        \item 
           if  $\frac{w}{\fmax}(1+n^{\frac{1}{3}}) < D\leq \frac{w}{\fr}
            \frac{1+n^{\frac{1}{3}}}{n^{\frac{1}{3}}}$, 
we are in the third case with no reliability, and therefore 
  $\sss = \frac{w}{D}(1+n^{\frac{1}{3}})$ and 
  $\ttt = \frac{w}{D}\frac{1+n^{\frac{1}{3}}}{n^{\frac{1}{3}}}$; 
  the upper bound on~$D$ guarantees that \mbox{$\ttt \geq \fr$}, while
  the lower bound on~$D$ guarantees that  \mbox{$\sss \leq \fmax$}; 
	\item if $\frac{w}{\fr} \frac{1+n^{\frac{1}{3}}}{n^{\frac{1}{3}}} <D\leq  \frac{2w}{\fr}$, 
	then the speed of the leaves is constrained by \fr, and we
        obtain $\ttt = \fr$ and 
$\sss  = \frac{w}{D\fr-w}\fr$. From the lower bound on~$D$, we obtain
$\sss<n^{\frac{1}{3}} \fr$, and since $\frac{w}{\fmax}(1+n^{\frac{1}{3}}) \leq
  w(\frac{1}{\fr}+\frac{1}{\fmax})$, we have $\sss<n^{\frac{1}{3}} \fr \leq
  \fmax$. 
  \end{compactitem}
\item If $\frac{w}{\fmax}(1+n^{\frac{1}{3}}) >
  w(\frac{1}{\fr}+\frac{1}{\fmax})$, then 
for $ w(\frac{1}{\fr}+\frac{1}{\fmax})<D \leq   \frac{2w}{\fr}$, 
the leaves should be
  executed at speed~$\ttt=\fr$, and for the source,
  $\sss=\frac{w}{D\fr-w}\fr$.  Note that the lower bound on~$D$ is
  equivalent to $\frac{w}{D\fr-w}\fr < \fmax$, and hence the speed of
  the source is not exceeding~\fmax. 
\end{compactitem}

\medskip
As stated above, if $D>\frac{2w}{\fr}$, both the source and the leaves
are executed at speed~\fr (with no re-execution). However, if the
deadline is larger, re-execution will be used by the optimal solution
(i.e., it will become scenario~(ii)). Let us consider therefore the
scenario in which leaves are re-executed, to compare the energy
consumption with the first scenario. 
In this case, we consider an equivalent fork in which leaves are of
weight~$2w$, and a schedule with no re-execution.  
Then the optimal solution when there is no maximum speed is: 
$$\sss = \frac{w}{D}(1+2n^{\frac{1}{3}}) \quad \text{and} \quad 
\ttt = \frac{w}{D}\frac{1+2n^{\frac{1}{3}}}{n^{\frac{1}{3}}}\; .$$ 

If $\ttt \geq \frac{1}{\sqrt{2}}\fr$, then there is a better solution
to the original problem without re-execution. 
Indeed, the solution in which the leaves (of weight~$w$) 
are executed once at speed
\mbox{$\ttt'=\max(\ttt,\fr)$} 
is such that:
 \begin{compactitem}
\item the reliability constraint is matched ($\ttt'\geq\fr$); 
\item the deadline constraint is matched ($\ttt'\geq \ttt$, and \ttt
  corresponds to the solution with re-execution, i.e., $w/\sss +
  2w/\ttt \leq D$); 
\item the energy consumption is better, as stated by
  Lemma~\ref{WC_re-ex} if $\ttt'=\fr$.  
\end{compactitem}

\noindent 
Therefore, we are in scenario~(ii) when $\ttt <
\frac{1}{\sqrt{2}}\fr$, i.e., 
$D>\frac{w}{\fr} \sqrt{2}\frac{1+2n^{\frac{1}{3}}}{n^{\frac{1}{3}}}$. \\
Moreover,  
depending whether $\sss \geq \fr$ or $\sss<\fr$:
\begin{compactitem}
\item if $\sss \geq \fr$, i.e.,  $D \leq \frac{w}{\fr}(1+2n^{\frac{1}{3}})$,
then the solution is valid; 
\item if $\sss < \fr$, 
then we must in fact have $\sss = \fr$, and then \mbox{$\ttt =
  \max(\frac{2w}{D\fr - w}\fr, \fmin)$.}
\end{compactitem}

\noindent Note that these values do not take into account the
constraints \fmax and \fmin. Therefore, they are lower bounds on the
energy consumption when the leaves are re-executed.

\medskip Finally, we establish a bound~$D_0$ on the deadline: for
larger values than~$D_0$, we cannot guarantee that re-execution will
not be used by the optimal solution, and hence we will have fully
characterized the cases for deadlines smaller than~$D_0$. Since we
have only computed lower bounds on energy consumption for the
scenario~(ii), this bound will not be tight.
We know that the minimum energy consumption is a function decreasing
with the deadline: if $D > D'$, then any solution for $D'$ is a
solution for~$D$. Let us find the minimum deadline~$D$ such that the
energy when the leaves are re-executed is smaller than the energy when
no task is re-executed.

As we have seen before, necessarily if $D\leq \frac{w}{\fr}
\sqrt{2}\frac{1+2n^{\frac{1}{3}}}{n^{\frac{1}{3}}}$, then it is better
to have no re-execution, i.e., $D_0 \geq  \frac{w}{\fr}
\sqrt{2}\frac{1+2n^{\frac{1}{3}}}{n^{\frac{1}{3}}}$.  
Let $D = \frac{w}{\fr}
\sqrt{2}\frac{1+2n^{\frac{1}{3}}}{n^{\frac{1}{3}}} + \epsilon$.
We suppose also that \mbox{$D \leq \frac{w}{\fr}(1+2n^{\frac{1}{3}})$}, i.e.,
the solution with re-execution is valid ($\sss\geq \fr$). 
\begin{compactitem}
\item The energy consumption when the leaves are re-executed is
  greater than\\ $E_2 = w \sss^2 + 2nw \ttt^2 = 
    \frac{w^3}{D^2}(1+2n^{\frac{1}{3}})^3$. 
\item With no re-execution, the deadline is large enough so that each
  task can be executed at speed~\fr, and therefore the energy
  consumption is\\ $E_1 = (1+n)w\fr^2 =
  2\frac{w^3}{(D-\epsilon)^2}(1+n)\left
    (\frac{1+2n^{\frac{1}{3}}}{n^{\frac{1}{3}}}\right )^2$.
\end{compactitem}

\noindent We now check the condition $E_1 \leq E_2$: 
\begin{align*}
2\frac{w^3}{(D-\epsilon)^2}(1+n)\left (\frac{1+2n^{\frac{1}{3}}}{n^{\frac{1}{3}}}\right )^2 &\leq \frac{w^3}{D^2}(1+2n^{\frac{1}{3}})^3\\
\frac{2}{(D-\epsilon)^2}\frac{1+n}{n^{\frac{2}{3}}} &\leq \frac{1+2n^{\frac{1}{3}}}{D^2} \\
\frac{D^2}{(D-\epsilon)^2} &\leq \frac{n^{\frac{2}{3}}+2n}{2+2n} \\
D &\leq \frac{w}{\fr}
\frac{(1+2n^{\frac{1}{3}})^{\frac{3}{2}}}{\sqrt{1+n}} = D_0
\end{align*}

Furthermore, note that $D_0 < \frac{w}{\fr}(1+2n^{\frac{1}{3}})$ for
$n>2$, hence the hypothesis that $\sss\geq \fr$ is valid for the
values considered. 
Finally, if the deadline is smaller than the threshold value~$D_0$, 
then we can guarantee that the optimal solution will not do any
re-execution. However, if the deadline is larger, we do not know what
happens (but it can be computed as a function of \fmin, \fmax and \fr).
\end{proof}

Beyond the case
analysis itself, the result of Proposition~\ref{prop_WC_fork} is
interesting: we observe that in all cases, the source task is executed
faster than the other tasks. This shows that
Proposition~\ref{prop_WC_fr} does not hold for general DAGs, and
suggests that some tasks may be more critical than others. A
hierarchical approach, that categorizes tasks with different
priorities, will guide the design of type~B heuristics in
Section~\ref{heurWC}.

\section{\VDD model}
\label{vdd_model}

Contrarily to the \continuous model, the \VDD model uses discrete
speeds. A processor can choose among a set $\{f_1,...,f_m\}$ of
possible speeds. A task can be executed at different speeds.  

Let $\alpha_{(i,j)}$ be the time of computation of task~$T_i$ at
speed~$f_j$. The execution time of a task $T_i$ is $\exe(T_i) =
\sum_{j=1}^m \alpha_{(i,j)}$, and the energy consumed during the
execution is $E_i = \sum_{j=1}^m \alpha_{(i,j)}f_j^3$.  Finally,
for the reliability, the approximation used in
Equation~(\ref{rel_first_order}) still holds. However, the reliability
of a task is now the product of the reliabilities for each time
interval with constant speed, hence
 $R_i = \prod_{j=1}^m (1-\lambda_0
\; e^{-df_j} \alpha_{(i,j)})$. Using
a first order approximation, we obtain\\[-.2cm]
\begin{equation}
	\label{vdd_rel_first_order}
R_i = 1-\lambda_0 \sum_{j=1}^m e^{-df_j} \alpha_{(i,j)} = 1-\lambda_0
\sum_{j=1}^m h_{j} \alpha_{(i,j)}, \text{ where  } h_j = e^{-df_j}, 1 \leq j \leq m.
\end{equation}

We first show that only two different speeds are needed for the
execution of a task.  
This result was already known for the bi-criteria problem
makespan/energy, and it is interesting to see that reliability does
not alter it:

\begin{prop}
    \label{vdd.2.speeds}
    With the \VDD model, each task is computed using at most two
    different speeds.
\end{prop}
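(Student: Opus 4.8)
The plan is to treat one execution of a fixed task $T_i$ (if $T_i$ is re-executed, each of its two executions is handled the same way) as a vector $(\alpha_{(i,1)},\dots,\alpha_{(i,m)})$ of nonnegative durations subject to three linear constraints: the task is completed, $\sum_{j} f_j\,\alpha_{(i,j)} = w_i$; its execution time $\tau_i = \sum_j \alpha_{(i,j)}$ is whatever it is in the given optimal schedule; and its reliability $R_i = 1 - \lambda_0 \sum_j e^{-df_j}\alpha_{(i,j)}$ stays above the threshold $R_i(\fr)$. The quantity to minimize is the energy $E_i = \sum_j f_j^3\,\alpha_{(i,j)}$. I would show that if an optimal schedule uses three or more distinct speeds for this execution, it can be modified so as to use one fewer speed while strictly decreasing $E_i$, not decreasing $R_i$, and changing neither $w_i$ nor $\tau_i$ --- contradicting optimality; iterating leaves at most two speeds. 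In spirit this is the known makespan/energy argument (two linear constraints force a vertex with at most two nonzero coordinates), with the twist that the reliability constraint turns out to come for free.

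Concretely, suppose speeds $f_a < f_b < f_c$ are all used with positive durations. I would perturb only $\alpha_{(i,a)}, \alpha_{(i,b)}, \alpha_{(i,c)}$ (all other durations fixed), along a direction $v$ that preserves both $\sum_j f_j\alpha_{(i,j)}$ and $\sum_j\alpha_{(i,j)}$; the two resulting homogeneous equations pin $v$ down up to scaling, and one may take $v = (f_b-f_c,\ f_c-f_a,\ f_a-f_b)$, whose first and last components are negative. Along $+v$, the change in $E_i$ is $\sum_j f_j^3 v_j$ and the change in the reliability ``cost'' $\sum_j e^{-df_j}\alpha_{(i,j)}$ is $\sum_j e^{-df_j} v_j$; both have the form $\phi(f_a)(f_b-f_c) + \phi(f_b)(f_c-f_a) + \phi(f_c)(f_a-f_b)$, which equals $-(f_a-f_b)(f_b-f_c)(f_c-f_a)$ times the second divided difference of $\phi$. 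Since $(f_a-f_b)(f_b-f_c)(f_c-f_a)>0$ here, and that divided difference is strictly positive for $\phi(x)=x^3$ (strictly convex on $(0,\infty)$) and nonnegative for $\phi(x)=e^{-dx}$ (convex), moving along $+v$ strictly lowers $E_i$ while not increasing the reliability cost, i.e. not decreasing $R_i$. I would then increase the step until one of $\alpha_{(i,a)}, \alpha_{(i,c)}$ reaches $0$ (both decrease along $+v$): a speed has been eliminated, the energy has strictly dropped, and the deadline and reliability constraints still hold, so the schedule was not optimal.

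The algebraic identity for the cubic and the divided-difference form of convexity are routine. The one point deserving care is the observation that the direction which decreases the energy inside the two-dimensional space of feasible perturbations also happens not to hurt reliability --- this is what makes two speeds (rather than three) enough, and it rests on both $x^3$ and $e^{-dx}$ being convex with the same sign pattern in the divided-difference identity. A secondary check is that $v$ decreases two of the three durations, so the process terminates by removing a speed rather than running off to infinity; and for a re-executed task the argument applies to each execution separately, since improving the reliability of either execution improves $1-(1-R_i^{(1)})(1-R_i^{(2)})$, and hence each execution uses at most two speeds.
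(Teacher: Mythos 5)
Your proposal is correct and is essentially the paper's own argument: your direction $v=(f_b-f_c,\;f_c-f_a,\;f_a-f_b)$ is exactly the paper's perturbation $(-\epsilon_1,\;\epsilon_1+\epsilon_3,\;-\epsilon_3)$ with $\epsilon_1=\epsilon_3\frac{f_3-f_2}{f_2-f_1}$, moving time from the two extreme speeds into the middle one while preserving total work and total duration, and your divided-difference identity is just a compact repackaging of the paper's two convexity inequalities for $x^3$ (strict, for energy) and $e^{-dx}$ (for reliability). The termination step (push until $\alpha_a$ or $\alpha_c$ vanishes) and the iteration down to two speeds also match the paper.
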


\begin{proof}
  Suppose that a task is computed with three speeds, $f_1\leq f_2 \leq
  f_3$, and let $h_j= e^{-df_j}$, for $j=1,2,3$.  We show that we can
  get rid of one of those speeds. The proof will follow by induction.
  Let $\alpha_i$ be the time spent by the processor at speed $f_i$.
  We aim at replacing each $\alpha_i$ by some $\alpha'_i$ so that we
  have a better solution. The constraints write:
\begin{compactenum}
	\item Deadline not exceeded: \\[-.4cm]
\begin{equation} \label{D}\alpha_1 + \alpha_2 + \alpha_3
\geq \alpha'_1 + \alpha'_2 + \alpha'_3. \end{equation}

	\item Same amount of work:\\[-.4cm]
\begin{equation} \label{W}\alpha_1f_1 + \alpha_2f_2 + \alpha_3f_3
= \alpha'_1f_1 + \alpha'_2f_2 + \alpha'_3f_3. \end{equation}

	\item Reliability preserved:\\[-.4cm]
\begin{equation} \label{R}\alpha_1h_1 + \alpha_2h_2 + \alpha_3h_3
\geq \alpha'_1h_1 + \alpha'_2h_2 + \alpha'_3h_3. \end{equation}

	\item Less energy spent:\\[-.4cm]
\begin{equation} \label{E}\alpha_1f_1^3 + \alpha_2f_2^3 + \alpha_3f_3^3
> \alpha'_1f_1^3 + \alpha'_2f_2^3 + \alpha'_3f_3^3. \end{equation}
\end{compactenum}
\medskip

\noindent
We show that 
    $\alpha_1'=\alpha_1-\epsilon_1$, 
    $\alpha_2'=\alpha_2+\epsilon_1+\epsilon_3$, and
    $\alpha_3'=\alpha_3-\epsilon_3$ is a valid solution:
\begin{compactitem}
	\item Equation (\ref{D}) is satisfied, since 
$\alpha_1 + \alpha_2 + \alpha_3 = \alpha'_1 + \alpha'_2 + \alpha'_3$.
	\item Equation (\ref{W}) gives
$\epsilon_1 = \epsilon_3 \left ( \dfrac{f_3-f_2}{f_2-f_1} \right )$.
	\item Next we replace the $\alpha'_i$ and $\epsilon_i$ in Equation 
(\ref{R}) and we obtain
$h_2(f_3-f_1) \leq h_1(f_3-f_2) + h_3(f_2-f_1)$,
which is always true by convexity of the exponential 
(since $h_j=e^{-df_j}$).
	\item Finally, Equation (\ref{E}) gives us
$\epsilon_1f_1^3 + \epsilon_3f_3^3 > (\epsilon_3+\epsilon_1)f_2^3$, 
which is necessarily true since $f_1<f_2<f_3$ and $f\rightarrow f^3$ is
convex (barycenter).
\end{compactitem}
Since we  want all the $\alpha'_i$ to be nonnegative, we take 
$$
\epsilon_1 =
\min \left(\alpha_1, \alpha_3\left ( \dfrac{f_3-f_2}{f_2-f_1} \right )
\right) 
\quad \text{and} \quad 
\epsilon_3 =
\min \left(\alpha_3, \alpha_1\left ( \dfrac{f_2-f_1}{f_3-f_2} \right ) \right).
$$

We have either $\epsilon_1 = \alpha_1$ or $\epsilon_3 = \alpha_3$,
which means that $\alpha'_1=0$ or $\alpha'_3=0$, and we 
can indeed compute the task with only two speeds, meeting
the constraints and with a smaller energy. 
\end{proof}

We are now ready to assess the problem complexity:
\begin{theorem}
\label{vddk_all_exec}
The \chainvdd problem is NP-complete.
\end{theorem}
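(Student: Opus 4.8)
The plan is to prove the two directions separately: membership in NP --- which is where \chainvdd parts ways with \chaincont --- and NP-hardness by a reduction from SUBSET-SUM~\cite{GareyJohnson} mirroring the one used for Theorem~\ref{cont_all_exec}. For membership, the point is that speeds now live in a finite set of rationals: by Proposition~\ref{vdd.2.speeds} each of the (at most two) executions of a task uses at most two of the $m$ discrete speeds, so a schedule is described by polynomially many rational durations $\alpha_{(i,j)}$ together with, for each task, the bit saying whether it is re-executed; given this data the makespan, the total energy, and each reliability value are recovered by direct polynomial-time evaluation and tested against $D$, $E$ and the $R_i(\fr)$. Hence \chainvdd is in NP, in contrast with \chaincont.

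For hardness I would start from an instance $\II_1$ of SUBSET-SUM: strictly positive integers $a_1,\dots,a_n$ and a target $X$, with $S=\sum_i a_i$, asking whether some $I\subseteq\{1,\dots,n\}$ has $\sum_{i\in I}a_i=X$. Build the instance $\II_2$ of \chainvdd consisting of a linear chain of $n$ tasks with $w_i=a_i$, on a single processor offering the two speeds $\{\fmin,\fmax\}$, with $\fr=\fmax$, $d=0$, $\fmin=\tfrac{1}{10}\fmax$ and $\lambda_0=\fmax/(100\max_i a_i)$ (so that $\fmin=\sqrt{\lambda_0\,\max_i a_i\,\fmax}$, exactly as in the proof of Theorem~\ref{cont_all_exec}). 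The thresholds are $R^0_i=R_i(\fr)$ for every $i$, the deadline $D_0=\dfrac{S}{\fr}+X\Bigl(\dfrac{2}{\fmin}-\dfrac{1}{\fr}\Bigr)$, and the energy bound $E_0=S\fr^2-X(\fr^2-2\fmin^2)$, where $\fr^2-2\fmin^2>0$ since $\fmin<\tfrac{1}{\sqrt{2}}\fr$; this is clearly a polynomial-time reduction. Two consequences of $d=0$ and $\fmax=\fr$ organize the analysis: a task executed once has reliability $1-\lambda_0\,\exe(T_i)$, which attains $R^0_i$ only when its whole execution runs at speed $\fr$; and a task re-executed with both copies at $\fmin$ has reliability $1-\lambda_0^2 a_i^2/\fmin^2\ge 1-\lambda_0 a_i/\fr=R^0_i$ by the choice of $\lambda_0$, so the reliability constraint never binds on a re-executed task.

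For the forward direction, if $\II_1$ has a solution $I$ then running every $T_i$, $i\in I$, twice at speed $\fmin$ and every $T_i$, $i\notin I$, once at speed $\fr$ gives makespan $(S-X)/\fr+2X/\fmin=D_0$, energy $(S-X)\fr^2+2X\fmin^2=E_0$, and all reliabilities hold by the remarks above. Conversely, given any schedule meeting the three bounds of $\II_2$, let $I$ be its set of re-executed tasks and $Y=\sum_{i\in I}a_i$; the tasks outside $I$ run entirely at $\fr$, contributing energy $(S-Y)\fr^2$ and time $(S-Y)/\fr$. For a fixed $I$ and $i\in I$, if $\tau_i\in[\,2w_i/\fr,\ 2w_i/\fmin\,]$ is the total time given to its two executions, one computes that its energy equals $2w_i(\fr^2+\fr\fmin+\fmin^2)-\fmin\fr(\fr+\fmin)\,\tau_i$, an affine function \emph{decreasing} in $\tau_i$, so the minimum energy for the set $I$ is attained by pushing $\sum_{i\in I}\tau_i$ up to $\min\bigl(D_0-(S-Y)/\fr,\ 2Y/\fmin\bigr)$. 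Using the definition of $D_0$, the term $2Y/\fmin$ is the binding one exactly when $Y\le X$ (all re-executions run at $\fmin$), giving minimum energy $S\fr^2-Y(\fr^2-2\fmin^2)$; for $X<Y$ (up to the feasibility bound $(S+Y)/\fr\le D_0$) the deadline is binding and the minimum energy is a strictly increasing affine function of $Y$ equal to $E_0$ at $Y=X$. In both regimes the minimum energy is at least $E_0$, with equality only for $Y=X$; since the given schedule has energy $\le E_0$ we must have $Y=X$, so $I$ solves $\II_1$. Together with membership this yields NP-completeness.

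The delicate step is the converse direction: showing that, for \emph{every} re-execution set $I$, the minimum energy over all admissible speed assignments is the claimed ``V-shaped'' function of $Y=\sum_{i\in I}w_i$, attaining its unique minimum $E_0$ precisely at $Y=X$. This rests on (i) collapsing each re-executed task's speed choice to the scalar $\tau_i$ and using the monotonicity of its energy in $\tau_i$, (ii) the case distinction between the cap $2Y/\fmin$ and the residual deadline $D_0-(S-Y)/\fr$, and (iii) checking the feasibility edge cases --- sets $I$ whose total weight $Y$ is too large are simply unschedulable, which only strengthens the argument. The remaining ingredients (membership, the polynomial size of the encoding, and the fact that reliability never constrains a re-executed task under these parameters) are routine.
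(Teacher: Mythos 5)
Your proof is correct and follows essentially the same route as the paper: NP membership via the polynomial-size certificate of per-speed durations, and hardness via the identical SUBSET-SUM gadget (two speeds with $\fmin=\fmax/10$, $\fr=\fmax$, $d=0$, and the same $D_0$ and $E_0$ up to rewriting). The only noteworthy difference is in the converse direction, where you parametrize each re-executed task by its total execution time $\tau_i$ and minimize energy as a piecewise-affine ``V-shaped'' function of $Y$, whereas the paper splits the re-executed weight into its $\fmin$- and $\fmax$-parts ($Y_1$, $Y_2$) and derives the chain $X\le Y$, $Y_1\le 2X$, $Y_1=2X=2Y$; both arguments analyze the same underlying linear program and reach the same conclusion.
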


The proof is similar to that of Theorem~\ref{cont_all_exec}, assuming
that there are only two available speeds, \fmin and~\fmax. Then we
reduce the problem from SUBSET-SUM.  Note that here again, the problem
turns out to be NP-hard even with one single processor (linear chain
of tasks). 

\begin{proof}
  Consider the associated decision problem: given an execution graph,
  $m$ possible speeds, a deadline, a reliability, and a bound on the
  energy consumption, can we find the time each task will spend at
  each speed such that the deadline, the reliability and the bound on
  energy are respected?  The problem is clearly in NP: given the time
  spent in each speed for each task, computing the execution time, the
  reliability and the energy consumption can be done in polynomial
  time.  To establish the completeness, we use a reduction from
  SUBSET-SUM~\cite{GareyJohnson}. Let $\II_1$ be an instance of
  SUBSET-SUM: given $n$ strictly positive integers $a_1, \ldots, a_n$,
  and a positive integer $X$, does there exist a subset $I$ of $\{1,
  \ldots, n\}$ such that $\sum_{i\in I}a_i= X $? Let $S=\sum_{i=1}^n
  a_i$.

\medskip
We build the following
instance~$\II_2$ of our problem. The execution graph is a linear chain
with $n$~tasks, where:
\begin{compactitem}
\item task $T_i$ has weight $w_i=a_i$;
\item the processor can run at $m=2$ different speeds, $\fmin$ and $\fmax$;
\item $\lambda_0 = \dfrac{\fmax}{100 \max_{i=1..n} a_i}$;
\item $\fmin = \sqrt{\lambda_0 \fmax \max_{i=1..n} a_i} = \frac{\fmax}{10}$;
\item $\fr=\fmax$; $d=0$. 
\end{compactitem}

\medskip
\noindent The bounds on reliability, deadline and energy are: 
\begin{compactitem}
\item $R^0_i = R_i(\fr) = 1 - \lambda_0
  \frac{w_i}{\fr}$ for $1\leq i \leq n$; 
\item $D_0=\frac{2X}{\fmin} + \frac{S-X}{\fmax} $;
\item $E_0= 2X \fmin^2 + (S-X)\fmax^2$.
\end{compactitem}

\medskip
Clearly, the size of $\II_2$ is polynomial in the size of $\II_1$.

\medskip \emph{Suppose first that instance $\II_1$ has a
  solution,~$I$.} For all $i\in I$, $T_i$ is executed twice at
speed~$\fmin$. Otherwise, for all $i \notin I$, it is executed at
speed~$\fmax$ one time only.  The execution time is $\frac{2\sum_{i\in
    I}a_i}{\fmin} + \frac{\sum_{i\notin I}a_i}{\fmax} =
\frac{2X}{\fmin} + \frac{S-X}{\fmax} = D$.  The reliability is met for
all tasks not in~$I$, since they are executed at speed~\fr. It is also
met for all tasks in~$I$: $\forall i \in I,~ 1 - \lambda^2_0
\frac{x^2_i}{\fmin^2} \geq 1 - \lambda_0 \frac{w_i}{\fmax}$.  The
energy consumption is $E= \sum_{i\in I}2 a_i \fmin^2 + \sum_{i\notin
  I}a_i\fmax^2= 2X \fmin^2 + (S-X)\fmax^2 = E_0$. All bounds are
respected, and therefore the execution speeds are a solution
to~$\II_2$ (and each task keeps a constant speed during its whole
execution).

\medskip
\emph{Suppose now that $\II_2$ has a solution.}  
Since we consider the \VDD model, each execution 
can be run partly at speed~$\fmin$, and partly at speed~$\fmax$.
However, tasks executed only once are necessarily
executed only at maximum speed to match the reliability constraint. 

Let $I=\{i\; |\; T_i\mbox{ is executed twice in the solution}\}$. 
Let $Y=\sum_{i\in I}a_i$. We have $2Y = Y_1 + Y_2$, where $Y_1$ is the
total weight of each execution and re-execution ($2Y$) of tasks in $I$
that are executed at speed~$\fmin$, and $Y_2$ the total weight that is
executed at speed~$\fmax$.  We show that necessarily $Y_1 = 2X = 2Y$,
i.e., no part of any task in~$I$ is executed at
speed~\fmax. 

First let us show that $2X \leq 2Y$.
The energy consumption of the solution of~$\II_2$ is $E = Y_1\fmin^2+
Y_2\fmax^2 + (S-Y)\fmax^2 = Y_1\fmin^2+ (S - Y_1 + Y)\fmax^2$.  By
differentiating this function (with regards to $Y_1$, $E'= \fmin^2 -
\fmax^2 < 0$), we can see that the minimum is reached for $Y_1 = 2Y$
(since $Y_1 \in [0,2Y]$). 
Then, for $Y_1 = 2Y$, since the solution is such that $E \leq E_0$, we
have $E-E_0 = (Y-X)(2\fmin^2-\fmax^2) \leq 0$, and therefore $X \leq Y$.

Next let us show that $Y_1 \leq 2X$.  Suppose by contradiction that
$Y_1 > 2X$, then the execution time of the solution of~$\II_2$ is
$D=\frac{Y_1}{\fmin}+\frac{Y_2}{\fmax} + \frac{S-Y}{\fmax} =
\frac{Y_1}{\fmin}+\frac{S - Y_1 + Y}{\fmax} $. By differentiating this
function (with regards to $Y_1$), we can see it is strictly increasing
when $Y_1$ goes from $2X$ to $2Y$.  However, when $Y_1 = 2X+\epsilon$,
$D-D_0=\frac{\epsilon}{\fmin} + \frac{Y-X+\epsilon}{\fmax}>0$ (indeed,
each value of the sum is strictly positive). Hence, $Y_1 \leq 2X$. 

Finally, let us show that $Y_1 = 2X = 2Y$. Since $\II_2$ is a
solution, we know that $E \leq E_0$, and therefore $2X-Y_1 \geq
(Y+X - Y_1) \frac{\fmax^2}{\fmin^2} \geq (Y+X - Y_1)$  (the last
equality is only met when $Y+X - Y_1 =0$). Hence $2X \geq X+Y$, which is
only possible if $2X=X+Y$. This gives us the final result: $Y_1 = 2X =
2Y$ (all inequalities are tight).  

\smallskip
We conclude that $\sum_{i\in I}a_i = X$, and therefore $\II_1$~has
a solution. This concludes the proof.
\end{proof}

\medskip

In the following, we propose some polynomial time heuristics to tackle
the general tri-criteria problem. While these heuristics are designed for
the \continuous model, they can be easily adapted to the \VDD model
thanks to Proposition~\ref{vdd.2.speeds}.

\section{Heuristics for \tricritcont}
\label{heurWC}

In this section, building upon the theoretical results of
Section~\ref{cont_model}, we propose some polynomial time heuristics
for the \tricritcont problem, which was shown NP-hard (see
Theorem~\ref{cont_all_exec}). Recall that the mapping of the tasks
onto the processors is given, and we aim at reducing the energy
consumption by exploiting re-execution and speed scaling, while
meeting the deadline bound and all reliability constraints.

The first idea is inspired by 
Proposition~\ref{prop_WC_fr}: first we search for
the optimal solution of the problem instance without re-execution, a
phase that we call \emph{deceleration}: we slow down some tasks
if it can save energy without violating one of the constraints.  Then
we refine the schedule and choose the tasks that we want to
re-execute, according to some criteria.  We call \emph{type~A
  heuristics} such heuristics that obey this general scheme: first
deceleration then re-execution.  Type A heuristics are expected to be
efficient on a DAG with a low degree of parallelism (optimal for a
chain). 

However, Proposition~\ref{prop_WC_fork} (with fork graphs)
shows that it might be better to re-execute highly parallel tasks
before decelerating.  Therefore we introduce \emph{type~B heuristics},
which first choose the set of tasks to be re-executed, and then try to
slow down the tasks that could not be re-executed. We need to find
good criteria to select which tasks to re-execute, so that type B
heuristics prove efficient for DAGs with a high degree of parallelism.
In summary, type B heuristics obey the opposite scheme: first
re-execution then deceleration.

For both heuristic types, the approach for each phase can be sketched
as follows. Initially, each task is executed once at
speed~$\fmax$. Then, let $d_i$ be the finish time of task~$T_{i}$ in
the current configuration:
\begin{compactitem}
\item[{\em - Deceleration:}] We select a set of tasks that we execute
  at speed $\fdec = \max(\fr, \frac{\max_{i=1..n} d_i}{D}\fmax)$, which is the
  slowest possible speed meeting both the reliability and deadline
  constraints.
\item[{\em - Re-execution:}] We greedily select tasks for
  re-execution. The selection criterion is either by decreasing weights
  $w_{i}$, or by decreasing \emph{super-weights} $W_{i}$. The
  super-weight of a task $T_{i}$ is defined as the sum of the weights
  of the tasks (including $T_{i}$) whose execution interval is
  included into $T_{i}$'s execution interval. The rationale is that
  the super-weight of a task that we slow down is an estimation of
  the total amount of work that can be slowed down together with that
  task, hence of the energy potentially saved: this corresponds to the
  total slack that can be reclaimed.
\end{compactitem}

\noindent
We introduce further notations before listing the heuristics: 
\begin{compactitem}
\item[\emph{- SUS} (Slack-Usage-Sort)] is a function that sorts tasks by
  decreasing super-weights.
\item[\emph{- ReExec}] is a function that tries to re-execute the current
  task $T_{i}$, at speed $\freex = \frac{2c}{1+c} \fr$, where $c = 4
  \sqrt{\frac{2}{7}}
  \cos{\frac{1}{3}(\pi-\tan^{-1}{\frac{1}{\sqrt{7}}})}-1 ~ (\approx
  0.2838)$ (note that 
  $\freex$ is the optimal speed in the proof of
  Theorem~\ref{cont_all_exec}).  If it succeeds, it also re-executes
  at speed $\freex$ all the tasks that are taken into account to
  compute the super-weight of $T_{i}$. Otherwise, it does nothing.
\item[\emph{- ReExec\&SlowDown}] performs the same re-executions as
  \emph{ReExec} when it succeeds. But if the re-execution of the
  current task $T_{i}$ is not possible, it slows down $T_{i}$ as much
  as possible and does the same for all the tasks that are taken into
  account to compute the super-weight of~$T_{i}$.
\end{compactitem}

\smallskip
\noindent
We now detail the heuristics:

{\bf H\fmax.} In this heuristic, tasks are simply executed once at
maximum speed.

{\bf Hno-reex.}  In this heuristic, we do not allow any re-execution,
and we simply consider the possible deceleration of the tasks. We set
a uniform speed for all tasks, equal to $\fdec$, so that both the
reliability and deadline constraints are matched.  
Note that heuristics {\bf H\fmax} and {\bf Hno-reex} are identical except 
for a constant ratio on the speeds of each task, $\frac{\fmax}{\fdec}$.
Therefore, the energy ratio $\frac{E_{\text{\bf H\fmax}}}{E_{\text{\bf Hno-reex}}}$
is always equal to $\left (\frac{\fmax}{\fdec}\right )^2$ (for instance, 
if $\fmax=1$ and $\fdec=2/3$, then the energy ratio is equal to 2.25).

{\bf A.Greedy.}  This is a type A heuristic, where we first set the
speed of each task to~$\fdec$ (deceleration).  Let Greedy-List be the list
of all the tasks sorted according to decreasing weights~$w_i$.  Each
task~$T_i$ in Greedy-List is re-executed at speed~$\freex$ whenever
possible.  Finally, if there remains some slack at the end of the
processing, we slow down both executions of each re-executed task as
much as possible.

{\bf A.SUS-Crit.}  This is a type A heuristic, where we
first set the speed of each task to~$fdec$.
Let List-SW be the list of all tasks that belong to a critical path,
sorted according to SUS. We apply ReExec to List-SW (re-execution).
Finally we reclaim slack for re-executed tasks, similarly to the final
step of {\bf A.Greedy}.

{\bf B.Greedy.}  This is a type B heuristic. We use
Greedy-List as in heuristic {\bf A.Greedy}.  We try to re-execute
each task~$T_i$ of Greedy-List when possible.  Then, we slow down
both executions of each re-executed task~$T_i$ of Greedy-List as
much as possible.  Finally, we slow down the speed of each task of
Greedy-List that turn out not re-executed, as much as possible.

{\bf B.SUS-Crit.}  This is a type B heuristic. We use
List-SW as in heuristic {\bf A.SUS-Crit}.  We apply ReExec to List-SW
(re-execution).  Then we run Heuristic B.Greedy.

{\bf B.SUS-Crit-Slow.}  This is a type B heuristic. We use
List-SW, and we apply ReExec\&SlowDown (re-execution).  Then we use
Greedy-List: for each task~$T_i$ of Greedy-List, if there is
enough time, we execute twice $T_i$ at speed \freex (re-execution);
otherwise, we execute $T_{i}$ only once, at the slowest admissible
speed.

{\bf Best.}  This is simply the minimum value over the seven
previous heuristics, for reference.

The complexity of all these heuristics is bounded by O$(n^4\log n)$,
where $n$~is the number of tasks. The most time-consuming operation is
the computation of List-SW (the list of all elements belonging to a
critical path, sorted according to SUS).

\section{Simulations}
\label{sec.exp_results}

In this section, we report extensive simulations to assess the
performance of the heuristics presented in Section~\ref{heurWC}.  The
heuristics were coded in OCaml. The source code is publicly available
at~\cite{gaupy-web} (together with additional results that were
omitted due to lack of space).
	
\subsection{Simulation settings}

In order to evaluate the heuristics, we have generated DAGs using the
random DAG generation library GGEN~\cite{ggen}. Since GGEN does not
assign a weight to the tasks of the DAGs, we use a function that gives
a random float value in the interval $[0,10]$. Each simulation uses a
DAG with $100$ nodes and $300$ edges. We observe similar patterns for
other numbers of edges, see~\cite{gaupy-web} for further
information. 

We apply a critical-path list scheduling algorithm to map the
DAG onto the $p$ processors: we assign the most urgent ready task
(with largest bottom-level) to the first available processor.  The
bottom-level is defined as $bl(T_i)=w_i$ if $T_i$~has no successor
task, and $bl(T_i)=w_i + \max_{(T_i,T_j)\in \mathcal{E}} bl(T_j)$
otherwise.

We choose a reliability constant $\lambda_0 =
10^{-5}$~\cite{Assayad11} (we obtain identical results with other
values, see below).  Each reported result is the average on ten
different DAGs with the same number of nodes and edges, and the energy
consumption is normalized with the energy consumption returned by the
{\bf Hno-reex} heuristic. If the value is lower than~$1$, it means
that we have been able to save energy thanks to re-execution.

We analyze the influence of three different parameters: the tightness
of the deadline $D$, the number of processors $p$, and the reliability
speed $\fr$. In fact, the absolute deadline $D$ is irrelevant, and we
rather consider the \emph{deadline ratio} 
$\DDratio = \frac{D}{\DDmin}$, where \DDmin is the execution time when
executing each task once and at maximum speed~$\fmax$ (heuristic
H\fmax).  Intuitively, when the deadline ratio is close to~$1$, there
is almost no flexibility and it is difficult to re-execute tasks,
while when the deadline ratio is larger we expect to be able to slow
down and re-execute many tasks, thereby saving much more energy.

\subsection{Simulation results}

First note that with a single processor, heuristics A.SUS-Crit
and A.Greedy are identical, and heuristics B.SUS-Crit
and B.Greedy are identical (by definition, the only critical path 
is the whole set of tasks).

\smallskip
\noindent
\textbf{Deadline ratio.} 
In this set of simulations, we let $p \in \{1, 10, 50, 70\}$ and $\fr
= \frac{2}{3} \fmax$.  Figure~\ref{results_deadline} reports results
for $p=1$ and $p=50$.  When $p=1$, we see that the results are 
identical for all heuristics of type~A, and identical for all
heuristics of type~B.  
As expected from Proposition~\ref{prop_WC_fr}, type~A heuristics are better 
(see Figure~\ref{deadline_1_10}). 
With more processors (10, 50, 70), the results
have the same general shape: see Figure~\ref{deadline_50_10} with 50
processors.  When \DDratio is small, type B heuristics are better.
When \DDratio increases up to $1.5$, type A heuristics are closer to
type B ones. Finally, when \DDratio gets larger than $5$, all
heuristics converge towards the same result, where all tasks are
re-executed.

\begin{figure*}[h]
\begin{center}
\subfloat
{
    \includegraphics[scale=0.6]{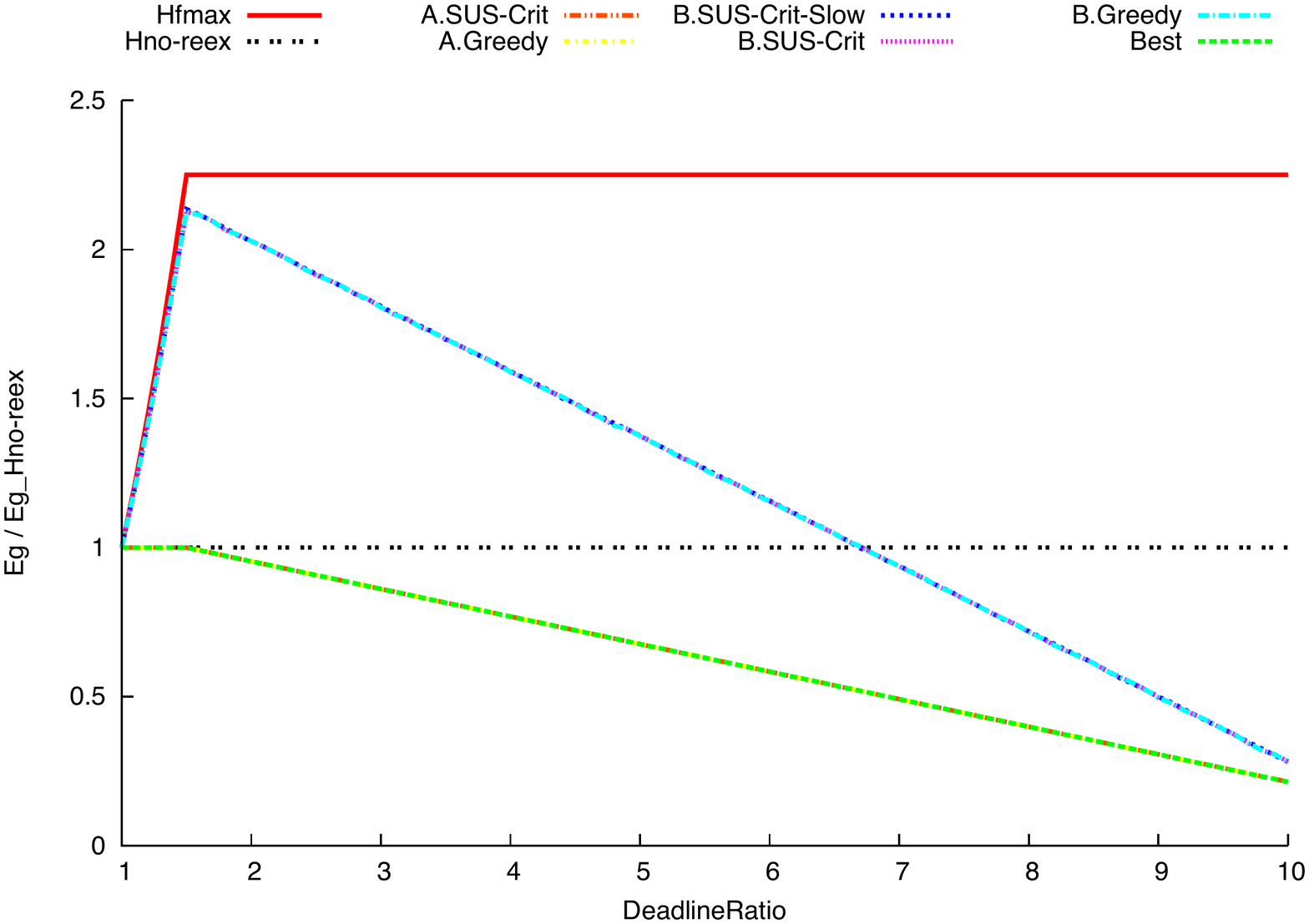}

}\\[.5cm]
\addtocounter{subfigure}{-1}
\subfloat[1 processor]
{\hspace{-1cm}
\includegraphics[scale=0.65]{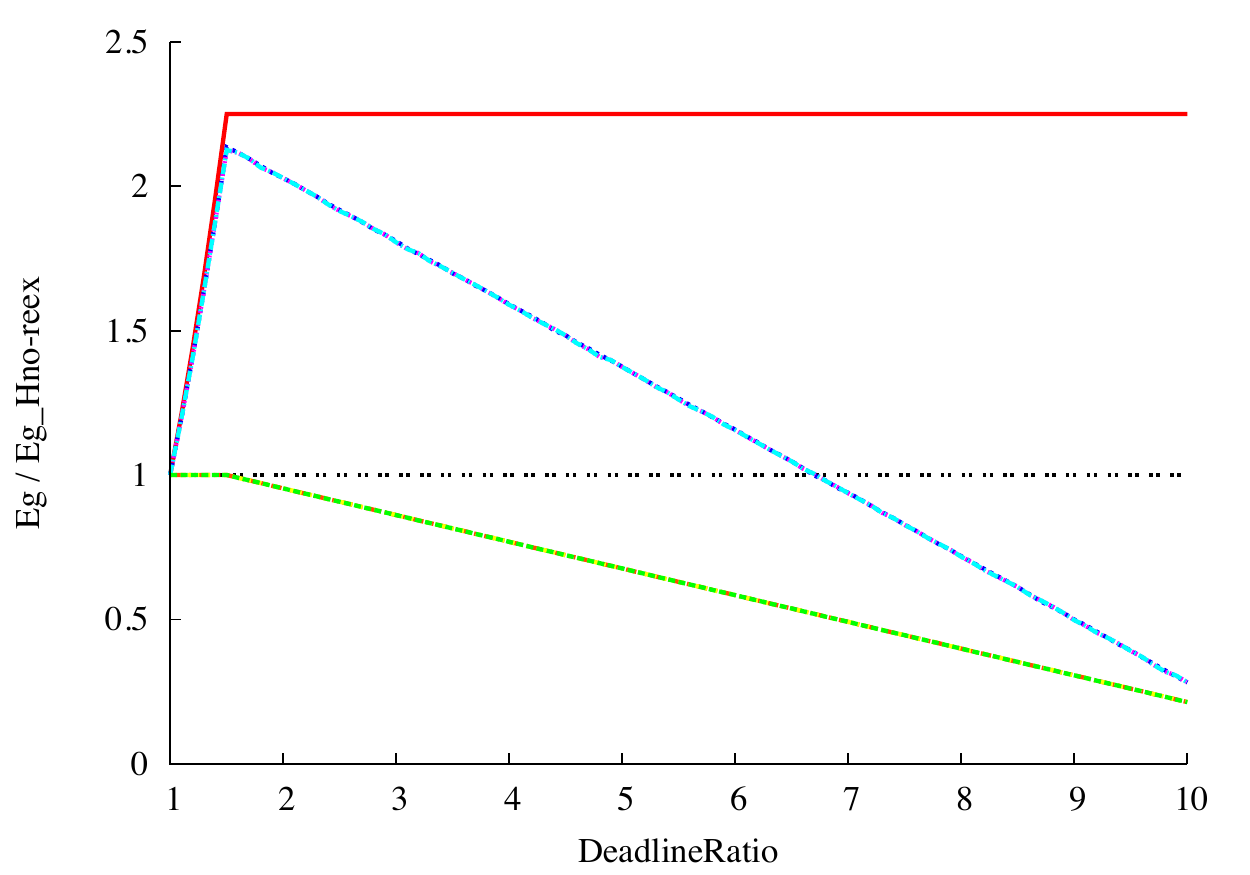}
	\label{deadline_1_10}
}
\subfloat[50 processors]
{
\includegraphics[scale=0.65]{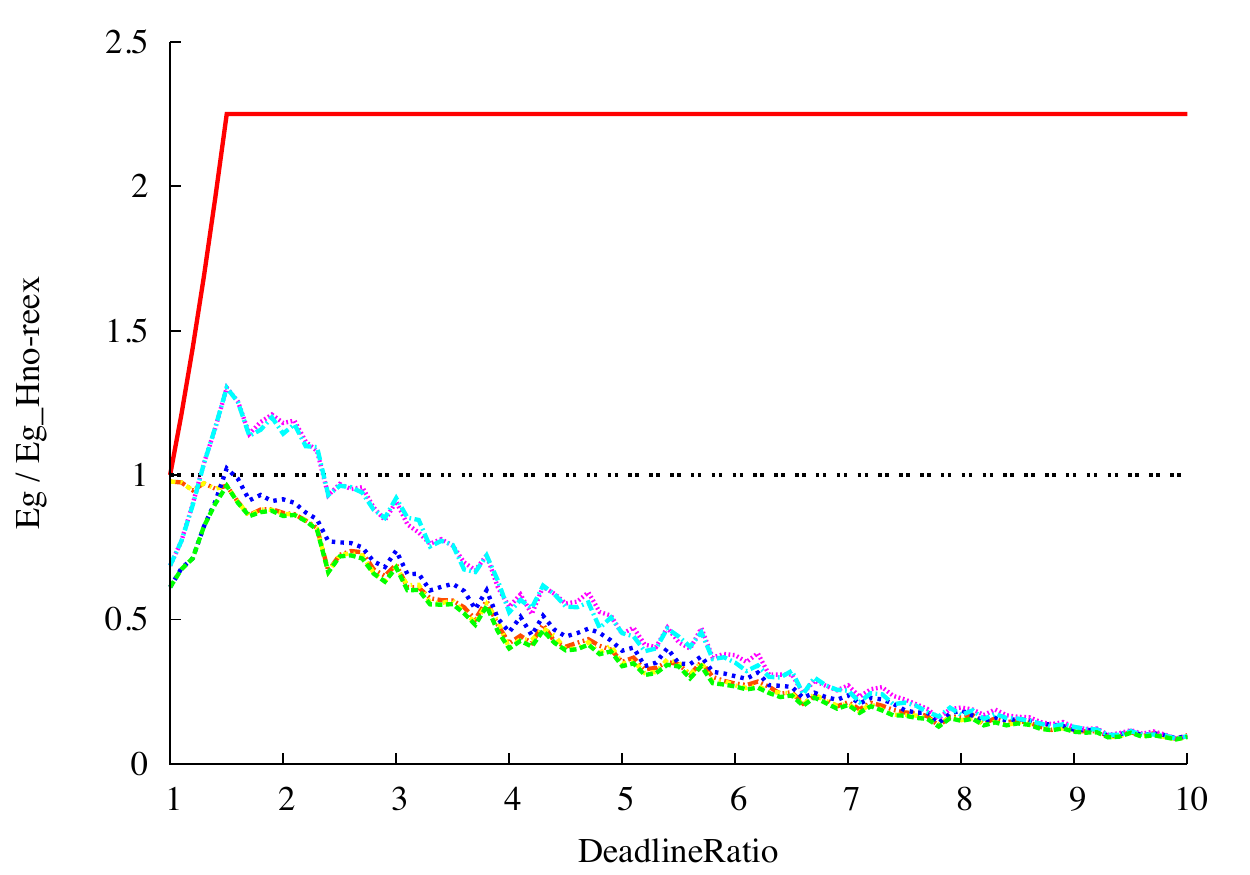}
	\label{deadline_50_10}
}	
\caption{Comparative study when the deadline ratio varies.}
\label{results_deadline}
\end{center}
\end{figure*}

\begin{figure*}[h]
\begin{center}
\subfloat[\DDratio = 1.2]
{\hspace{-1cm}
\includegraphics[scale=0.65]{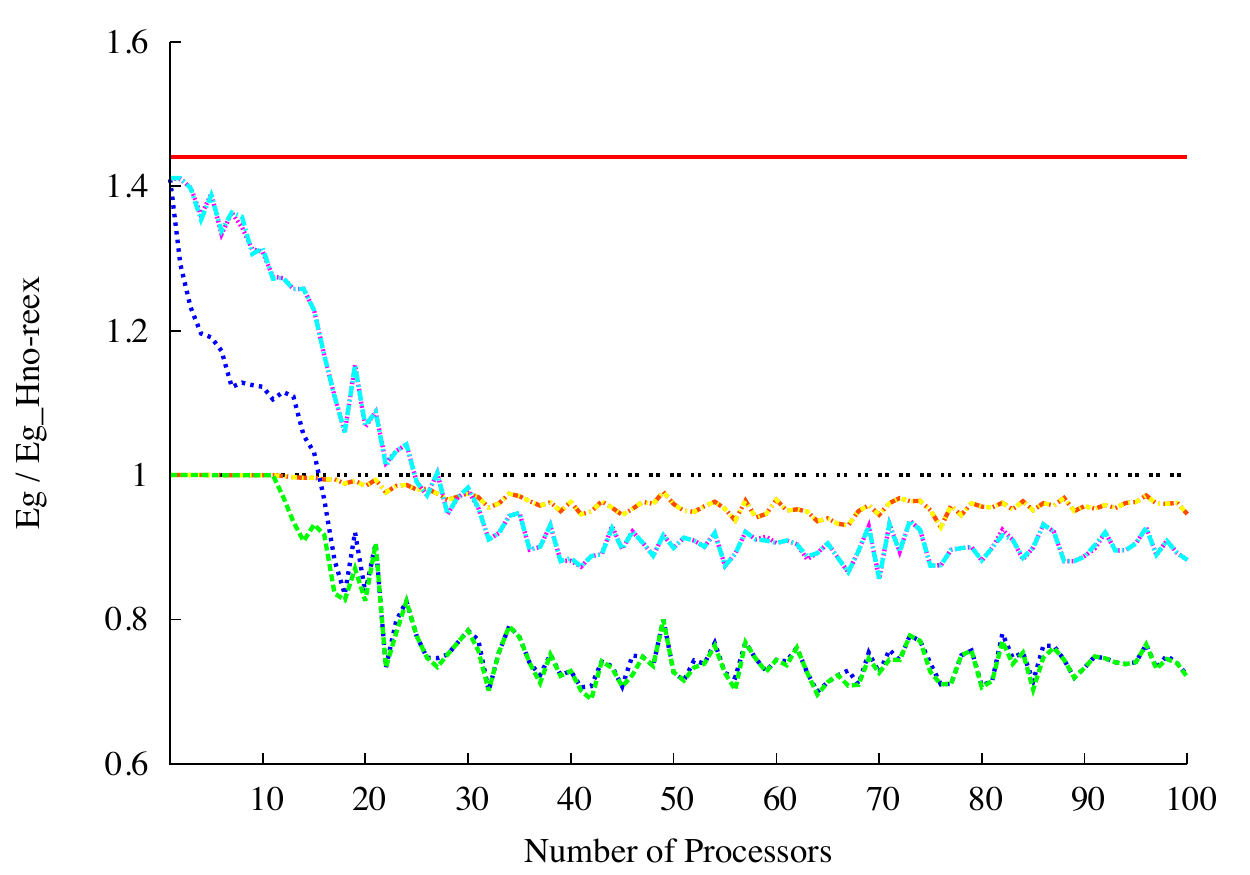}
	\label{proc_100_12}
}
\subfloat[\DDratio = 2.4]
{
\includegraphics[scale=0.65]{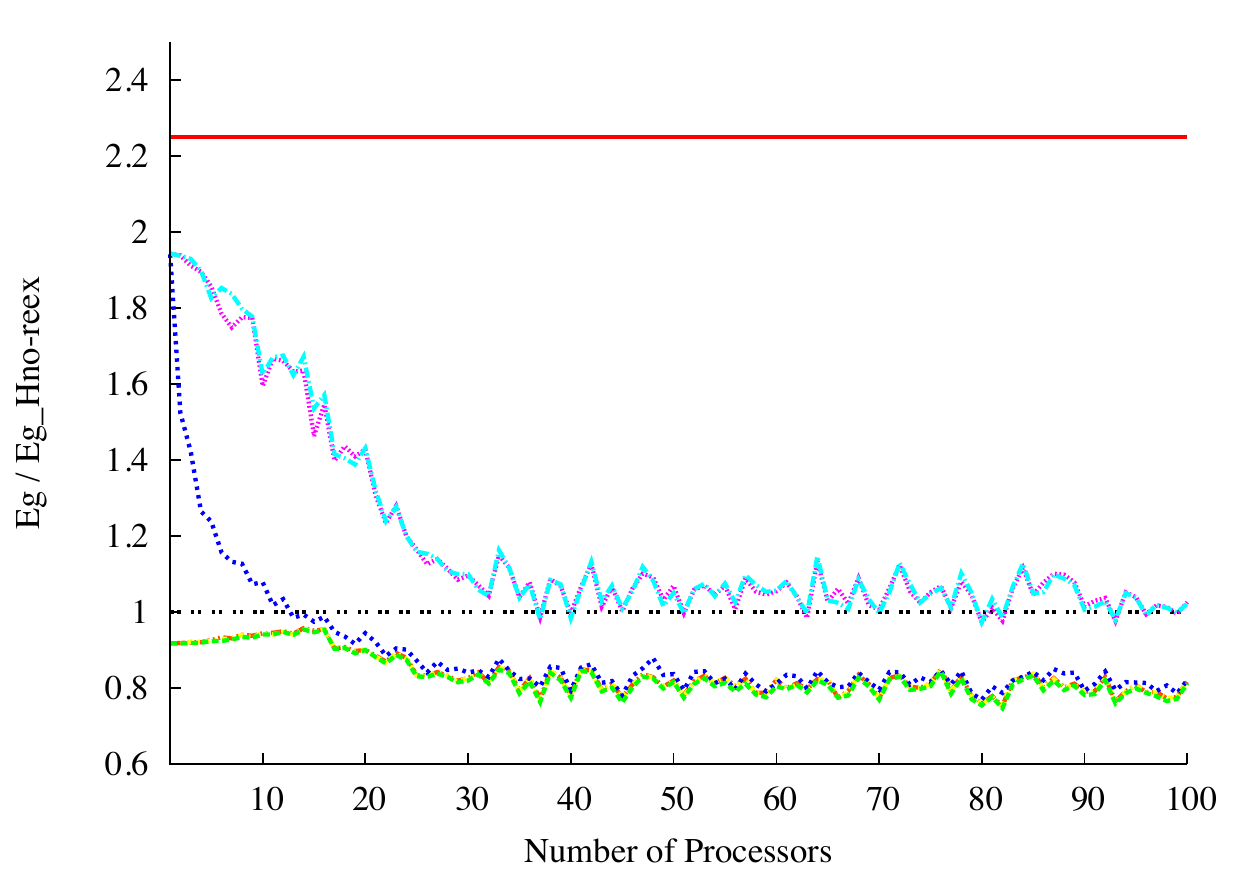}
	\label{proc_100_24}
}
\caption{Comparative study when the number of processors~$p$ varies.}
\vspace{-.5cm}
\label{results_proc}
\end{center}
\end{figure*}

\smallskip
\noindent
\textbf{Number of processors.}
In this set of simulations, we let $\DDratio \in \{1.2, 1.6, 2, 2.4\}$
and $\fr = \frac{2}{3} \fmax$.  Figure~\ref{results_proc} confirms
that type A heuristics are particularly efficient when the number of
processors is small, whereas type B heuristics are at their best when
the number of processors is large. Figure~\ref{proc_100_12} confirms the
superiority of type B heuristics for tight deadlines, as was observed
in Figure~\ref{deadline_50_10}.

\smallskip
\noindent
\textbf{Reliability \fr.}
In this set of simulations, we let $p \in \{1, 10, 50, 70\}$ and
$\DDratio \in \{1, 1.5, 3\}$. In Figure~\ref{results_fr}, there are
four different curves: the line at~$1$ corresponds to Hno-reex and
H\fmax, then come the heuristics of type A (that all obtain exactly
the same results), then B.SUS-Crit and B.Greedy that also obtain the
same results, and finally the best heuristic is B.SUS-Crit-Slow.  Note
that B.SUS-Crit and B.Greedy return the same results because they have
the same behavior when $\DDratio=1$: there is no liberty of action on
the critical paths. However B.SUS-Crit-Slow gives better results
because of the way it decelerates the important tasks that cannot be 
re-executed. 
When \DDratio is really
tight (equal to~$1$), decreasing the value of $\fr$ from $1$ 
to $0.9$ makes a real difference with type B heuristics.  We
observe an energy gain of 10\% when the number of processors is small
($10$ in Figure~\ref{fr_1_10_300}) and of 20\% with more processors
($50$ in Figure~\ref{fr_1_50_100}).

\smallskip
\noindent
\textbf{Reliability constant $\lambda_0$.} In
Figure~\ref{results_lambda}, we let $\lambda_0$ vary from $10^{-5}$ to
$10^{-6}$, and observe very similar results throughout this range of
values. Note that we did not plot H\fmax in this figure to ease the
readability.

\clearpage
\begin{figure*}[h]
\begin{center}
\subfloat
{
    \includegraphics[scale=0.6]{legende.pdf}

}\\[.5cm]
\addtocounter{subfigure}{-1}
\subfloat[\DDratio = 1;  $\;10$ processors]
{\hspace{-1cm}
\includegraphics[scale=0.65]{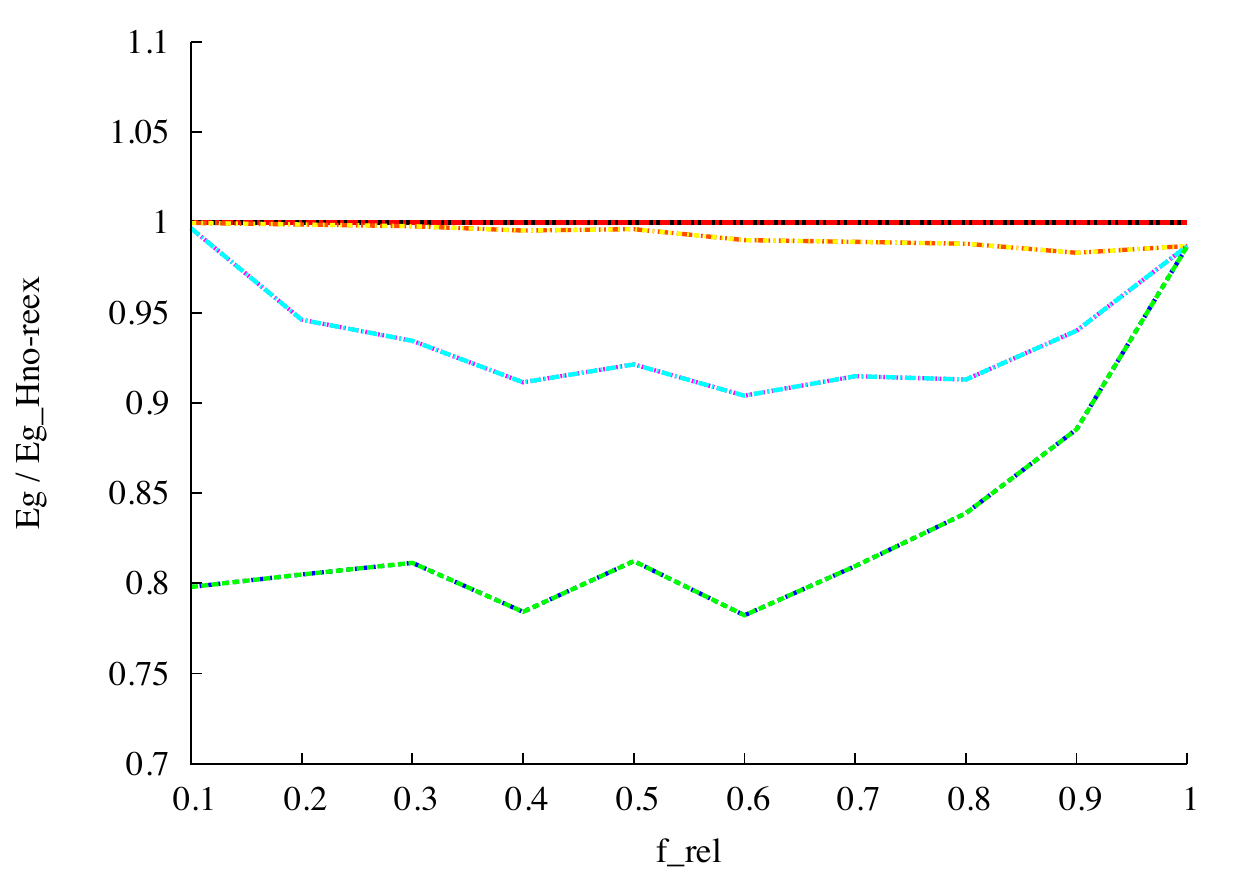}
	\label{fr_1_10_300}
}
\subfloat[\DDratio = 1;  $\;50$ processors]
{
\includegraphics[scale=0.65]{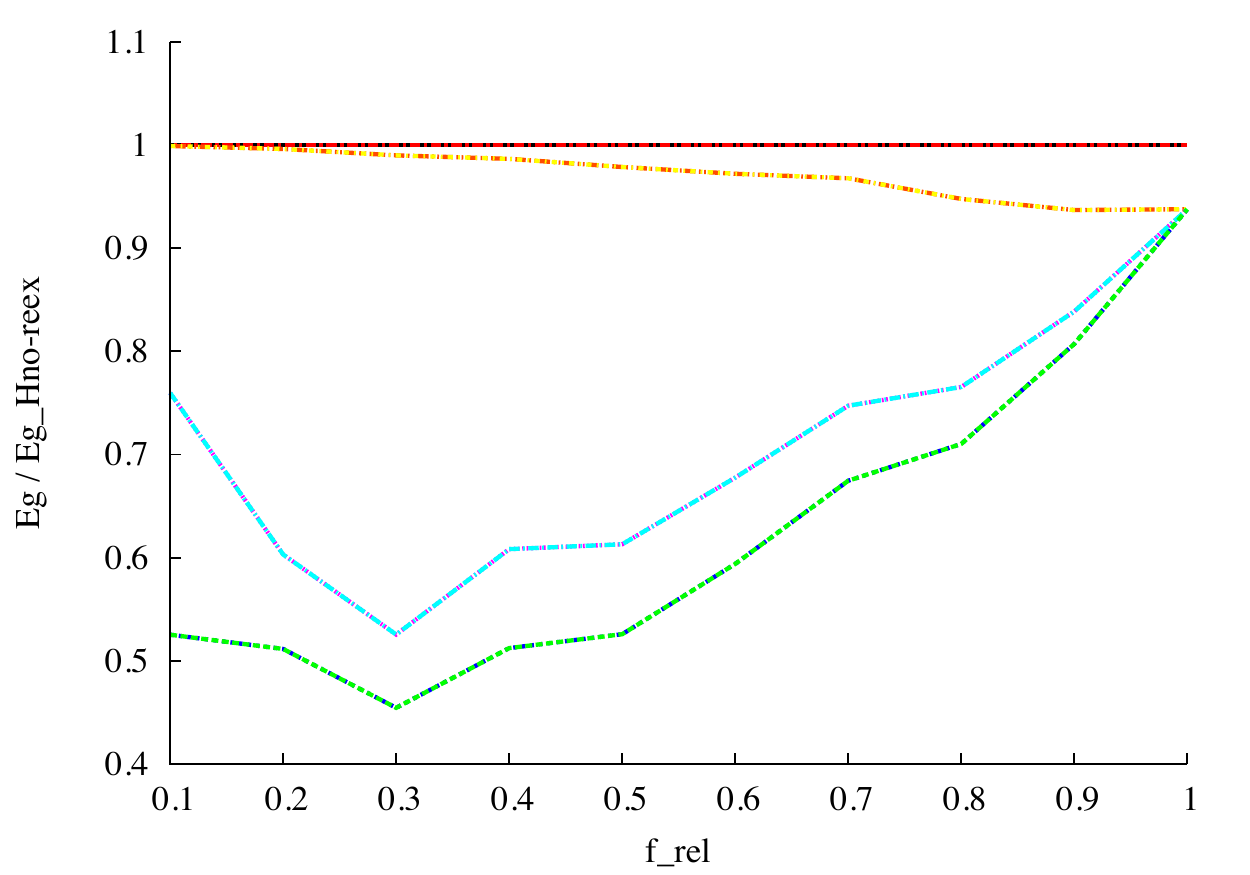}
	\label{fr_1_50_100}
}
\caption{Comparative study when the reliability \fr varies.}
\label{results_fr}
\end{center}
\end{figure*}

\begin{figure*}[h]
\begin{center}
\subfloat[\DDratio = 1.1;  $\;50$ processors]
{\hspace{-1cm}
\includegraphics[scale=0.65]{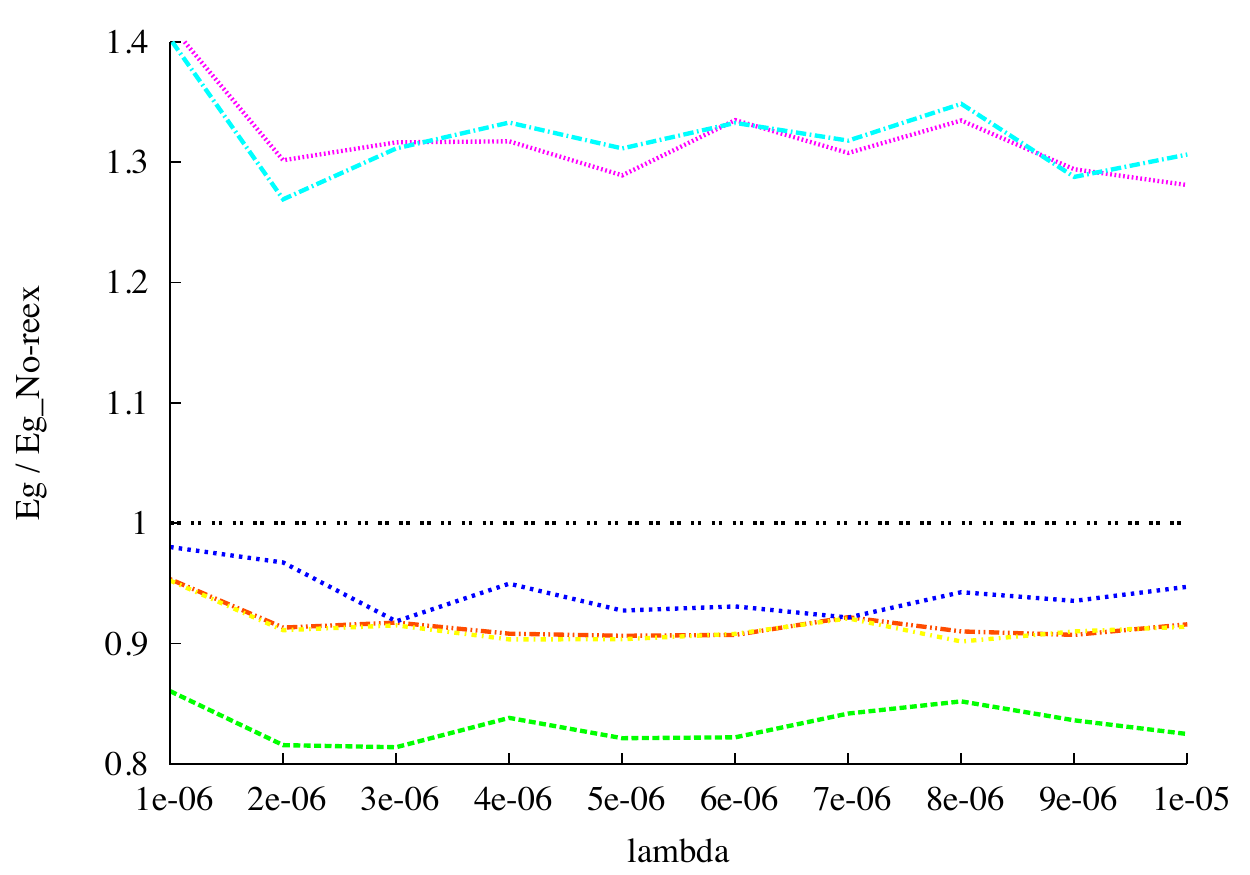}
	\label{lambda_1-1_50_300}
}
\subfloat[\DDratio = 1.5;  $\;50$ processors]
{
\includegraphics[scale=0.65]{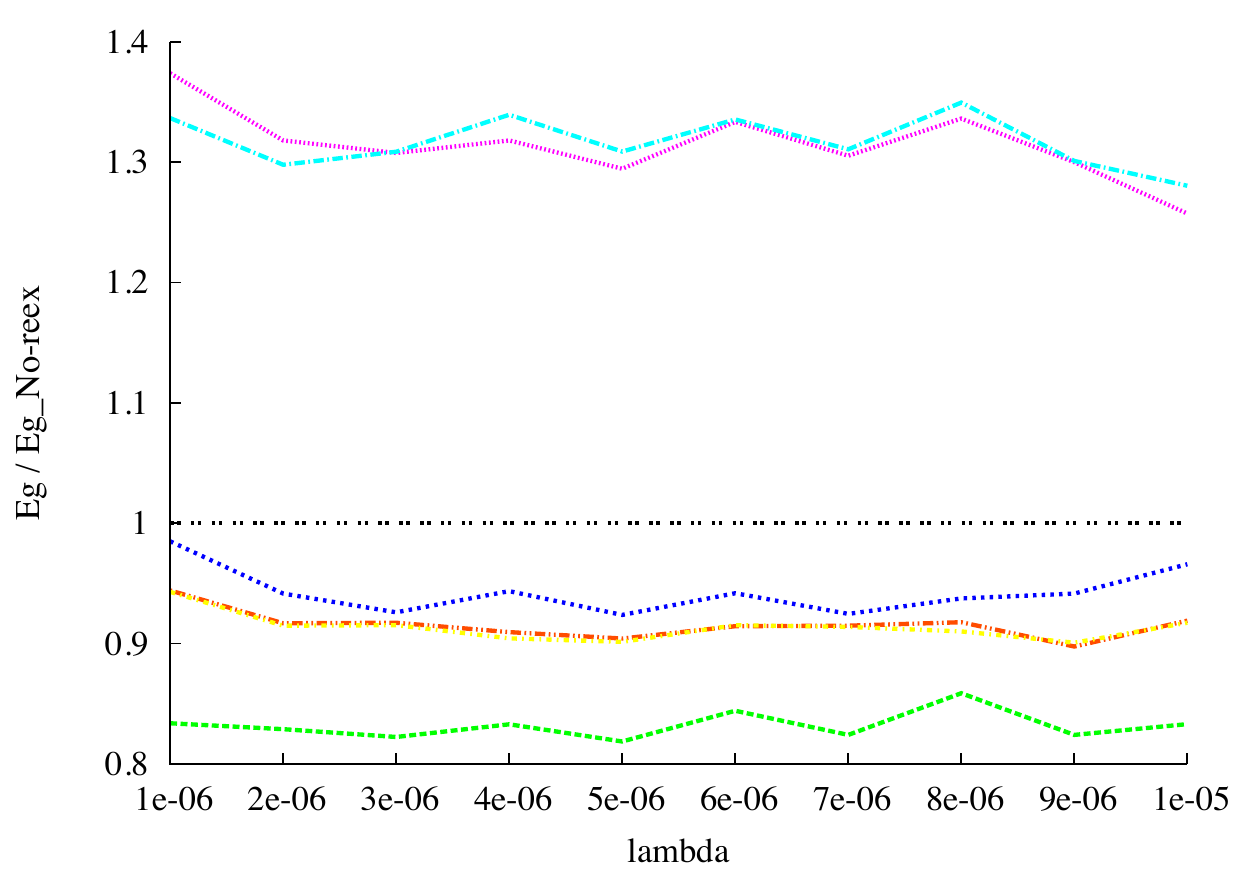}
	\label{lambda_1-5_50_300}
}
\caption{Comparative study when $\lambda_0$ varies.}
\vspace{-1cm}
\label{results_lambda}
\end{center}
\end{figure*}

\subsection{Understanding the results}

A.SUS-Crit and A.Greedy, and B.SUS-Crit and B.Greedy, often obtain
similar results, which might lead us to underestimate the importance
of critical path tasks.  However, the difference between
B.SUS-Crit-Slow and B.SUS-Crit shows otherwise.  Tasks that belong to
a critical path must be dealt with first.

A striking result is the impact of both the number of processors and
the deadline ratio on the effectiveness of the heuristics. Heuristics
of type~A, as suggested by Proposition~\ref{prop_WC_fr}, have much
better results when there is a small number of processors. When the
number of processors increases, there is a difference between small
and large deadline ratio. In particular, when the deadline ratio is
small, heuristics of type~B have better results. Indeed, heuristics of
type~A try to accommodate as many tasks as possible, and as a
consequence, no task can be re-executed. On the contrary, heuristics
of type~B try to favor some tasks that are considered as
important. This is highly profitable when the deadline is tight.

Note that all these heuristics take in average less than one {\em ms}
to execute on one instance, which is very reasonable. The heuristics
that compute the critical path (*.SUS-Crit-*) are the longest, and may
take up to two seconds when there are few processors. Indeed, the less
processors, the more edges there are in the dependence graph once the
task graph is mapped, and hence it increases the complexity of finding
the critical path. However, with more than ten processors, the running
time never exceeds two {\em ms}.

Altogether we have identified two very efficient and complementary
heuristics, A.SUS-Crit and B.SUS-Crit-Slow. Taking the best result out
of those two heuristics always gives the best result over all
simulations.

\section{Conclusion}
\label{sec.conclusion}

In this paper, we have accounted for the energy cost associated to
task re-execution in a more realistic and accurate way than the
best-case model used in~\cite{Zhu06}. Coupling this energy model with
the classical reliability model used in~\cite{Shatz89}, we have been
able to formulate a tri-criteria optimization problem: how to minimize
the energy consumed given a deadline bound and a reliability
constraint?  The ``antagonistic`` relation between speed and
reliability renders this tri-criteria problem much more challenging
than the standard bi-criteria (makespan, energy) version.  We
have stated two variants of the problem, for processor speeds obeying
either the \continuous or the \VDD model. We have assessed the
intractability of this tri-criteria problem, even in the case of a
single processor. In addition, we have provided several complexity
results for particular instances. 

We have designed and evaluated some polynomial-time heuristics for the
\tricritcont problem that are based on the failure probability, the
task weights, and the processor speeds.  These heuristics aim at
minimizing the energy consumption 
while enforcing reliability and deadline constraints. They rely on
\emph{dynamic voltage and frequency scaling} (DVFS) to decrease the
energy consumption. But because DVFS lowers the reliability of the
system, the heuristics use \emph{re-execution} to compensate for the
loss.  After running several heuristics on a wide class of problem
instances, we have identified two heuristics that are complementary,
and that together are able to produce good results on most
instances. The good news is that these results bring the first
efficient practical solutions to the tri-criteria optimization
problem, despite its theoretically challenging nature.
In addition, while the heuristics do not
modify the mapping of the application, it is possible to couple them
 with a list scheduling algorithm, as was done in the
simulations, in order to solve the more general problem in which the
mapping is not already given. 

\smallskip

Future work involves several promising directions.  On the theoretical
side, it would be very interesting to prove a competitive ratio for
the heuristic that takes the best out of A.SUS-Crit and
B.SUS-Crit-Slow. However, this is quite a challenging work for
arbitrary DAGs, and one may try to design approximation algorithms
only for special graph structures, e.g., series-parallel graphs.
Still, looking back at the complicated case analysis needed for an
elementary fork-graph with identical weights
(Proposition~\ref{prop_WC_fork}), we cannot underestimate the
difficulty of this problem.

While we have designed heuristics for the \tricritcont
model in this paper, we could easily adapt them to the \tricritvdd
model: for a solution given by a heuristic for \tricritcont, if a task
should be executed at the continuous speed~$f$, then we would execute
it at the two closest discrete speeds that bound~$f$, while matching
the execution time and reliability for this task. There remains to
quantify the performance loss incurred by the latter constraints.

Finally, we point out that energy reduction and reliability will be
even more important objectives with the advent of massively parallel
platforms, made of a large number of clusters of multi-cores.  More
efficient solutions to the tri-criteria optimization problem
(makespan, energy, reliability) could be achieved through combining
replication with re-execution.  A promising (and ambitious) research
direction would be to search for the best trade-offs that can be
achieved between these techniques that both increase reliability, but
whose impact on execution time and energy consumption is very
different.  We believe that the comprehensive set of theoretical
results and simulations given in this paper will provide solid
foundations for further studies, and constitute a partial yet
important first step for solving the problem at very large scale.


\noindent{\em Acknowledgments.} A.~Benoit and Y.~Robert are with the
Institut Universitaire de France.  This work was supported in part by
the ANR 
{\em RESCUE} project.

\bibliographystyle{abbrv}
\bibliography{biblio}

\end{document}